\pgfplotsset{width=10cm,compat=1.15}
\newtheorem{theorem}{Theorem}
\newtheorem{remark}{Remark}
\newtheorem{proposition}{Proposition}
\newtheorem{assumption}{Assumption}
\title{Recovering Counterfactual Distributions via Wasserstein GANs}
\author{
    Xinran Liu \\ 
    Department of Economics \\
    University of California, Riverside \\
    \texttt{xinran.liu@email.ucr.edu} 
}
\date{January 2026}
\newcommand{\REQUIRE}{\Require}
\newcommand{\COMMENT}[1]{\Comment{#1}}
\begin{document}
\maketitle
\begin{abstract}
Standard Distributional Synthetic Controls (DSC) estimate counterfactual distributions by minimizing the Euclidean $L_2$ distance between quantile functions. We demonstrate that this geometric reliance renders estimators fragile: they lack informative gradients under support mismatch and produce structural artifacts when outcomes are multimodal. This paper proposes a robust estimator grounded in Optimal Transport (OT). We construct the synthetic control by minimizing the Wasserstein-1 distance between probability measures, implemented via a Wasserstein Generative Adversarial Network (WGAN). We establish the formal point identification of synthetic weights under an affine independence condition on the donor pool. Monte Carlo simulations confirm that while standard estimators exhibit catastrophic variance explosions under heavy-tailed contamination and support mismatch, our WGAN-based approach remains consistent and stable. Furthermore, we show that our measure-based method correctly recovers complex bimodal mixtures where traditional quantile averaging fails structurally.
\end{abstract}

\section{Introduction}
\label{intro}

Since its inception by \cite{abadie2003economic} and subsequent formalization by \cite{abadie2010synthetic}, the Synthetic Control Method has emerged as the canonical framework for comparative case studies in empirical economics. By constructing a counterfactual unit as a convex combination of untreated donors, the method provides a transparent, data-driven mechanism to estimate treatment effects. While the standard framework focuses exclusively on mean outcomes, recent policy evaluations increasingly demand an understanding of distributional impacts. Whether assessing the effect of minimum wage laws on income inequality or the impact of trade liberalization on firm productivity dispersion, structural interest often lies in the tails or the shape of the distribution rather than its first moment. Addressing this imperative, recent scholarship has advanced the concept of DSC to recover the full counterfactual probability law of the treated unit \citep{gunsilius2023distributional}.

Despite these conceptual advances, existing implementations remain vulnerable to the limitations of the Euclidean metrics they employ. Standard approaches typically estimate counterfactuals by minimizing the integrated squared error between the quantile functions of the treated and synthetic units. This strategy relies on pointwise vertical comparisons and implicitly assumes that the counterfactual distribution is a location-scale transformation of the donors. Such a restriction is non-trivial; it imposes a form of rank invariance that prevents the estimator from recovering complex morphological changes, such as a unimodal distribution evolving into a bimodal one. Furthermore, in settings characterized by support mismatch—where the outcomes of the treated unit lie outside the support of the donor distributions—the squared error objective becomes locally constant. This lack of informative gradients renders gradient-based optimization ineffective and leads to severe estimator instability.

Our methodological departure from Euclidean metrics to OT is primarily motivated by a pervasive challenge in financial econometrics: characterizing the geometry of risk under structural breaks. Financial time series—unlike smooth macroeconomic aggregates—are frequently characterized by heavy-tailed contamination, disjoint supports due to liquidity gaps, and complex multimodal distributions during regime switches. In such regimes, standard distributional estimators that rely on quantile averaging or probability density matching often yield vanishing gradients or spurious unimodal artifacts, failing to capture the true latent factor structure. By reformulating the Synthetic Control problem as a Wasserstein minimization via a WGAN, we provide a robust estimator capable of recovering counterfactual probability laws even when the post-shock distribution shares no common support with the pre-shock history. While we demonstrate this capability through a canonical policy experiment—the Kansas tax cuts, which serves as an ideal stress test involving a massive, exogenous distributional shift—the proposed framework is broadly applicable to asset pricing and risk management, particularly for recovering counterfactual return distributions and tail dependencies in the presence of idiosyncratic shocks.

To resolve these impediments, we propose a robust estimator grounded in the theory of OT. We identify synthetic control weights by minimizing the Wasserstein-1 distance between the treated and synthetic distributions. The departure from Euclidean metrics to the Wasserstein distance represents a fundamental shift in how discrepancies are quantified. While Euclidean metrics penalize deviations at fixed coordinate points, the Wasserstein distance measures the minimum cost required to transport probability mass from one distribution to another. This metric respects the underlying geometry of the outcome space and induces a topology that preserves gradient information even when distributions have disjoint supports.

The economic intuition behind this shift is straightforward. Consider a scenario where a policy shifts the entire outcome distribution of a treated unit by a constant amount relative to a control unit. A Euclidean metric comparing densities or quantile functions at fixed points would register this merely as a maximum error, failing to capture the proximity of the two distributions. In contrast, the Wasserstein distance linearly reflects the magnitude of this shift, correctly identifying the structural similarity between the units. This property allows our estimator to effectively search for donors that are structurally closest to the treated unit based on the geometry of the data, rather than failing due to a lack of local density overlap. Consequently, our approach relaxes the rigidity of quantile averaging and permits the construction of synthetic counterfactuals that exhibit complex shapes, such as mixtures, which are not present in any single donor unit.

Our work contributes to three distinct strands of literature. First, we extend the methodological frontier of the SCM. Following the seminal work of \cite{abadie2010synthetic}, numerous extensions have been proposed to improve weight selection, including penalized estimators and matrix completion methods \citep{athey2021matrix, ben2021augmented}. While these innovations enhance robustness for mean outcomes, they do not address the estimation of full distributions. Second, we connect with the growing literature on OT in economics. \cite{galichon2016optimal} demonstrated the power of OT for structural estimation in matching models, yet its application to causal inference and time-series counterfactuals remains nascent. We bridge this gap by showing that the geometry induced by OT is uniquely suited for the Synthetic Control framework. Third, we add to the integration of machine learning into econometrics. Unlike approaches that utilize machine learning purely for prediction \citep{chernozhukov2018double}, we employ deep learning architectures to solve a structural minimization problem defined by a theoretically grounded statistical metric.

The primary theoretical contributions of this paper are the asymptotic theory and a formal identification result for synthetic controls in the space of probability measures. We demonstrate that the validity of our estimator does not require the restrictive rank-invariance assumptions inherent in quantile-based methods. Instead, we posit a latent factor model where the evolution of the outcome distribution is governed by unobserved time-varying common factors and unit-specific factor loadings. We show that under a scaled isometry condition, a synthetic control that successfully matches the pre-treatment distribution of the treated unit in terms of the Wasserstein distance effectively recovers the underlying factor loadings. This result generalizes the classic parallel trends assumption to the Wasserstein space, providing a rigorous justification for using linear mixtures of donors to approximate the counterfactual probability law of the treated unit.

Computationally, estimating the Wasserstein distance in high-dimensional or continuous spaces presents significant challenges, often requiring linear programming solutions that are cubically expensive in the sample size. To render this estimation feasible, we introduce a computational framework based on the Kantorovich-Rubinstein duality. Rather than solving the primal transport problem directly, we estimate the dual potentials using a WGAN. In this architecture, a critic network approximates the Lipschitz-continuous function that maximizes the transport cost, while a generator network—representing the synthetic weights—minimizes it. This adversarial formulation bypasses the sorting bottlenecks of traditional quantile methods and allows for the scalable, differentiable estimation of weights even when the outcome variables are multivariate. By solving the dual problem, our estimator robustly handles continuous supports and provides stable gradients, ensuring convergence even in regimes where standard density-matching techniques fail.

We corroborate our theoretical and computational arguments through a series of Monte Carlo simulations designed to stress-test the estimator under pathological conditions often encountered in economic data. Our results demonstrate that the WGAN-based solver significantly outperforms standard $L_2$ benchmarks in terms of both stability and accuracy. Specifically, in scenarios involving heavy-tailed data contamination and severe support mismatch, the standard quantile-based estimator exhibits substantial variance and bias, whereas our Wasserstein-based method remains robust. We further distinguish our approach by its ability to handle structural complexity. In a simulation where the true counterfactual involves a transition from a unimodal to a bimodal distribution—mimicking phenomena such as labor market polarization—we show that while quantile-averaging methods structurally fail and yield a unimodal artifact, our mixture-based approach successfully recovers the distinct modes. Additionally, we provide a proof-of-concept demonstration in a multivariate setting, confirming that our adversarial framework scales effectively to higher-dimensional outcomes where traditional sorting-based quantile methods are computationally intractable.

The remainder of the paper is organized as follows. Section 2 reviews the relevant literature. Section 3 outlines the econometric framework and defines the notation. Section 4 details the estimation strategy, introducing the WGAN architecture and the optimization algorithm based on the dual formulation of OT. Section 5 establishes the asymptotic theory and the formal identification result. Section 6 reports the results of our Monte Carlo simulations to evaluate finite-sample performance. Section 7 presents an empirical application. Section 8 concludes. All mathematical proofs and implementation details are provided in the Appendix.

\section{Literature Review}

Since its formalization by \cite{abadie2010synthetic}, the Synthetic Control Method has become a cornerstone of comparative case studies in empirical economics. The framework's ability to construct data-driven counterfactuals has catalyzed a rich methodological literature aimed at enhancing the robustness of average treatment effect estimates. Significant extensions include the introduction of penalized estimators to mitigate overfitting in high-dimensional donor pools \citep{abadie2021penalized} and the augmented synthetic control method, which corrects for regularization bias \citep{ben2021augmented}. Further integrating the method with modern causal inference tools, \cite{arkhangelsky2021synthetic} developed synthetic difference-in-differences to leverage both unit and time fixed effects, while \cite{athey2021matrix} reformulated the counterfactual imputation problem via matrix completion. Collectively, these contributions have substantially refined the estimation of conditional means $\mathbb{E}[Y]$, establishing a rigorous standard for aggregate policy evaluation.

Complementing these aggregate approaches, a growing literature emphasizes the necessity of evaluating distributional impacts. As articulated in the program evaluation literature by \cite{bitler2006mean} and \cite{firpo2009unconditional}, structural parameters of interest—such as inequality measures or tail risks—often depend on the full probability law rather than the first moment. Addressing this imperative, \cite{gunsilius2023distributional} pioneered the extension of SCM to the space of probability measures. This framework constructs the counterfactual distribution by minimizing the $L_2$ distance between the quantile function of the treated unit and a weighted average of donor quantile functions. Subsequent work has further solidified this quantile-based paradigm: \cite{zhang2024asymptotic} established the asymptotic optimality of such estimators under general regularity conditions, while \cite{kato2023asymptotically} introduced distribution-matching variations designed to enhance finite-sample performance. These contributions have successfully generalized the logic of synthetic controls to distributional objects under the assumption that the counterfactual preserves the rank structure of the donors.

Despite these theoretical strides, existing implementations remain vulnerable to data imperfections common in empirical settings. A primary concern arises from the reliance on squared-error loss functions. As noted in the robust statistics literature \citep{huber1992robust}, such metrics are highly sensitive to contamination, implying that tail outliers can disproportionately distort weight estimation. A second, more fundamental challenge pertains to the geometry of the support. The validity of standard quantile averaging is predicated on the assumption that the target distribution lies within the convex hull of the donors. In scenarios characterized by support mismatch—where the target distribution is disjoint from the donors—metrics based on pointwise vertical discrepancies fail to capture the underlying proximity, often resulting in uninformative gradients. These geometric limitations suggest the need for a metric that respects the topology of the outcome space, motivating our departure from Euclidean distances.

Our methodology aligns with the geometric turn in recent econometric theory, characterized by the increasing integration of OT principles into structural modeling. While classically rooted in pure mathematics, OT was systematically introduced to economics by \cite{galichon2016optimal}, primarily as a rigorous tool for analyzing matching markets and hedonic models. The central object of this theory—the Wasserstein distance—offers a metric for probability measures that explicitly respects the underlying geometry of the outcome space. This geometric sensitivity provides a decisive advantage in settings where standard information-theoretic divergences, such as the Kullback-Leibler divergence or the Total Variation distance, fail to provide meaningful comparisons for distributions with disjoint supports. Consequently, the Wasserstein metric has proven instrumental in facilitating advances across a spectrum of econometric challenges, ranging from distributionally robust optimization \citep{blanchet2019quantifying} to partial identification in incomplete models.

More pertinent to our framework is the subsequent shift from employing OT solely for identification to utilizing it for statistical inference. A growing literature on Wasserstein Minimum Distance Estimation (W-MDE) demonstrates that matching model-implied distributions to data via transport costs yields consistent estimates, offering a robust alternative in settings where the likelihood function is intractable or singular \citep{bernton2019parameter}. This inference paradigm shares a fundamental conceptual link with Generative Adversarial Networks (GANs). In the econometric domain, this connection has been formalized by \cite{kaji2023adversarial}, who establish that adversarial estimators can be interpreted as minimizing an Integral Probability Metric (IPM) defined by a discriminator class. From a statistical perspective, \cite{liang2021how} further provides minimax rates of convergence for distribution estimation under this framework, offering theoretical guarantees for non-parametric estimation. Collectively, these results provide the necessary foundation for utilizing neural networks in causal inference: rather than functioning as opaque ``black box'' generators, the adversarial architecture serves as a flexible, data-driven device for moment selection, ensuring that the synthetic controls satisfy the geometric constraints of the outcome space.

Despite these significant strides in cross-sectional distribution estimation, the application of OT-based adversarial inference to panel data and dynamic causal frameworks remains underexplored. While \cite{gunsilius2023distributional} explicitly acknowledges the theoretical appeal of Wasserstein metrics for synthetic controls—noting their ability to capture geometric proximity beyond rank invariance—no feasible computational strategy has yet been proposed to operationalize this concept for trajectory reweighting. Existing extensions of SCM that incorporate machine learning, such as \cite{athey2021matrix}, predominantly focus on low-rank approximations for mean outcomes rather than recovering full distributional counterfactuals via geometric transport. This paper bridges this specific gap. By adapting the Wasserstein GAN architecture to the unique structure of the synthetic control problem, we leverage the geometric robustness of OT to resolve the support mismatch failure that limits existing quantile-based estimators. This contribution effectively extends the practical reach of adversarial inference from static density estimation to the dynamic, counterfactual setting of comparative case studies.

\subsection{Deep Generative Models for Causal Inference}

Finally, our work contributes to the rapidly expanding interface between deep learning and causal inference. The integration of machine learning into econometrics has historically proceeded through two distinct waves. The first wave leveraged regularization techniques to manage high-dimensional controls and improve predictive accuracy \citep{belloni2014high, mullainathan2017does}. This foundation paved the way for a second wave focused on valid statistical inference, best exemplified by the Double Machine Learning framework of \cite{chernozhukov2018double} and the Generalized Random Forests of \cite{wager2018estimation}. These methods utilize flexible function approximators to handle nuisance parameters, thereby enabling the unbiased estimation of average and heterogeneous treatment effects. However, these canonical approaches operate primarily within the paradigm of discriminative learning, targeting conditional expectations $\mathbb{E}[Y|X]$ while leaving the modeling of complex, high-dimensional counterfactual distributions largely unaddressed.

To capture distributional heterogeneity beyond the conditional mean, recent scholarship has increasingly turned to generative modeling. Unlike discriminative approaches, generative frameworks aim to learn the underlying probability density of the potential outcomes. Notable contributions in this vein include the GANITE framework of \cite{yoon2018ganite}, which employs GANs to impute missing counterfactuals in cross-sectional data, and \cite{hartford2017deep}, who introduced Deep IV to address endogeneity via deep neural networks. While these studies demonstrate that adversarial training can effectively recover complex data-generating processes, the application of deep generative models to the Synthetic Control framework remains nascent. Current machine learning extensions of SCM predominantly focus on low-rank approximations; for instance, \cite{athey2021matrix} formulate the counterfactual estimation as a matrix completion problem using nuclear norm regularization. While powerful, such methods rely heavily on the assumption that the matrix of potential outcomes possesses a low-rank structure. They do not explicitly model the geometric transport of probability mass between distributions. Our work bridges this divide by proposing a WGAN-based solver that moves beyond the low-rank assumption, offering a methodology capable of preserving the distributional properties of the trajectory even when the data exhibits complex, non-linear structures.

\section{The Econometric Framework}\label{economicframe}

This section formalizes the problem of recovering counterfactual distributions. We depart from Euclidean metrics on quantile functions \citep{gunsilius2023distributional} by grounding our framework in the geometry of OT. This shift enables causal identification even when the support of the target distribution is disjoint from the donor pool, while accommodating granular, micro-level data structures.

\subsection{Setup and Notation}

Consider a panel of $J+1$ units observed over $T$ periods. Let unit $i=1$ denote the treated unit exposed to an intervention after period $T_0$, where $1 \le T_0 < T$. The set $\mathcal{D} = \{2, \dots, J+1\}$ serves as the donor pool of units that remain untreated throughout the observation window.

Unlike standard synthetic control methods that focus on point estimates $\mathbb{E}[Y_{it}]$, our object of interest is the full probability law of the outcome. We utilize the full granular information available at the micro-level. For each unit $i$ and period $t$, we observe a sample of $M_{it}$ independent observations, denoted by $\{y_{it}^{(k)}\}_{k=1}^{M_{it}}$. We define the observed outcome $P_{it} \in \mathcal{P}_1(\mathcal{X})$ as the empirical measure:
\begin{equation}
P_{it} := \frac{1}{M_{it}} \sum_{k=1}^{M_{it}} \delta_{y_{it}^{(k)}}
\end{equation}
where $\delta$ denotes the Dirac mass. Crucially, we do not require the sample size $M_{it}$ to be balanced. The number of micro-observations can vary arbitrarily across units and time periods, allowing for the flexible integration of unbalanced survey data.

For the post-intervention period $t > T_0$, we observe the treated distribution $P_{1t} = \mathcal{L}(Y_{1t}(1))$, but we seek to estimate the counterfactual distribution $P_{1t}^N = \mathcal{L}(Y_{1t}(0))$. We construct a synthetic control as a convex combination of the donor distributions. Let $\Lambda = \Delta^{J}$ denote the simplex defined as follows:
\begin{equation}
\Lambda = \left\{ \lambda \in \mathbb{R}^{J} \mid \lambda_j \ge 0, \sum_{j=2}^{J+1} \lambda_j = 1 \right\}
\end{equation}
For any weight vector $\lambda \in \Lambda$, we define the synthetic distribution $P_{\lambda, t}$ as the linear mixture of donor measures:
\begin{equation}
P_{\lambda, t} := \sum_{j=2}^{J+1} \lambda_j P_{jt}
\end{equation}
This formulation imposes no parametric constraints on the shape of the counterfactual, allowing for the recovery of complex, multimodal distributions.

\subsection{Estimation and Aggregation}

The core innovation of our framework lies in the choice of the discrepancy metric used to determine the optimal weights. While standard methods typically minimize the integrated squared error between quantile functions, we adopt the Wasserstein-1 distance to ensure robustness against support mismatch.

Formally, utilizing the Kantorovich-Rubinstein duality, we define the transport cost between the treated unit and a synthetic mixture at any specific time period $t$ as:
\begin{equation}
W_{1}(P_{\lambda,t}, P_{1t}) = \sup_{f \in Lip_{1}(\mathcal{X})} \left( \mathbb{E}_{x \sim P_{1t}}[f(x)] - \sum_{j=2}^{J+1} \lambda_j \mathbb{E}_{x \sim P_{jt}}[f(x)] \right)
\end{equation}
where $Lip_{1}(\mathcal{X})$ denotes the set of 1-Lipschitz continuous functions. Unlike Euclidean metrics, this distance provides valid gradient signals even when the supports of the distributions are disjoint, as the gradient is governed by the underlying transport cost rather than pointwise density overlaps.

To construct the final synthetic control, our estimator aggregates information across the pre-intervention window. We adopt a period-wise estimation strategy to capture the structural fit at each time step. For each pre-intervention period $t \le T_0$, we define the period-specific optimal weights $\lambda_{t}^{*}$ as the minimizer of the regularized Wasserstein loss:
\begin{equation}\label{argmin}
\lambda_{t}^{*} = \mathop{\mathrm{argmin}}_{\lambda \in \Lambda} \left( W_{1}(P_{\lambda,t}, P_{1t}) - \eta H(\lambda) \right)
\end{equation}
where $H(\lambda) = - \sum_{j=2}^{J+1} \lambda_j \log \lambda_j$ is the entropic regularization term. The parameter $\eta > 0$ controls the trade-off between fitting the transport cost and maximizing the entropy of the weights, ensuring solution uniqueness and numerical stability. Note that while the objective function varies with $t$ through the observed distributions, the regularization term $H(\lambda)$ depends solely on the weight vector and is time-invariant.

The final synthetic control weights $\lambda^{*}$ are obtained by aggregating these period-specific estimates using temporal weights $\{w_{t}\}_{t=1}^{T_{0}}$ (typically $w_{t} = 1/T_{0}$ for a simple average):
\begin{equation}
\lambda^{*} = \sum_{t=1}^{T_{0}} w_{t} \lambda_{t}^{*}
\end{equation}
This two-step aggregation procedure stabilizes the estimation by reducing the variance associated with any single period's idiosyncratic shocks, while ensuring that the final weights reflect a consistent structural similarity across the entire pre-intervention history.\footnote{We employ a linear aggregation of period-specific weights rather than a single global optimization to accommodate potential structural breaks or heteroskedasticity in the pre-treatment period. However, alternative aggregation schemes, such as weighting periods by their inverse Wasserstein loss, are straightforward extensions of this framework.}

\subsection{Structural Identification}

We now establish the conditions under which minimizing the pre-intervention discrepancy identified in Eq. \ref{argmin} recovers the true counterfactual distribution. Following the literature on synthetic controls, we assume the outcomes are generated by a latent factor structure, which we extend to the space of probability measures.

\begin{assumption}[Latent Factor Generation]\label{latent}
    For each unit $i$ and time $t$, the potential outcome distribution under control, $P_{it}^N$, is generated by a time-varying transport map $\mathcal{T}_{t}: \mathcal{P}(\mathcal{Z}) \to \mathcal{P}(\mathcal{X})$ acting on a unit-specific, time-invariant distribution of latent factors $\mu_{i} \in \mathcal{P}(\mathcal{Z})$:
\begin{equation}
P_{it}^N = \mathcal{T}_{t}(\mu_{i})
\end{equation}
This formulation generalizes the standard interactive fixed effects model, $Y_{it} = \delta_t + \theta_t' Z_i + \epsilon_{it}$, where $\mathcal{T}_t$ represents the aggregate structural evolution of the economy acting on unit-specific heterogeneity $\mu_i$.
\end{assumption}

\begin{assumption}[Structural Stability via Scaled Isometry]
    The family of transport maps $\{\mathcal{T}_{t}\}_{t=1}^{T}$ satisfies a scaled isometry condition with respect to the Wasserstein metric. Specifically, for any two latent measures $\nu_{1}, \nu_{2} \in \mathcal{P}(\mathcal{Z})$ and any time $t$, there exists a scaling factor $\kappa_{t} > 0$ such that:
\begin{equation}
W_{1}(\mathcal{T}_{t}(\nu_{1}), \mathcal{T}_{t}(\nu_{2})) = \kappa_{t} W_{1}(\nu_{1}, \nu_{2})
\end{equation}
\end{assumption}

\textbf{Remark.} Assumption 2 implies that while aggregate shocks may alter the location, scale, or shape of outcomes (scaling distances by $\kappa_t$), they preserve the relative structural geometry between units. This condition serves as a distributional generalization of the parallel trends assumption.

Based on this structural geometry, Theorem 1 establishes that a synthetic control constructed to match the pre-treatment observables identifies the latent counterfactual.

\paragraph{Theorem 1 (Counterfactual Recovery).}
Suppose Assumptions 1 and 2 hold. Let $\lambda^{*} \in \Lambda$ be a weight vector satisfying the following conditions:
\begin{enumerate}
    \item Observable Matching: $\sum_{j=2}^{J+1} \lambda_{j}^{*} P_{jt} = P_{1t}$ for all $t \le T_{0}$.
    \item Latent Convexity: The treated unit's latent factors lie within the convex hull of the donors, i.e., $\mu_1 = \sum_{j=2}^{J+1} \lambda_j^* \mu_j$.
\end{enumerate}
Then, $\lambda^{*}$ perfectly recovers the counterfactual distribution in the post-intervention period:
\begin{equation}
\sum_{j=2}^{J+1} \lambda_{j}^{*} P_{jt} = P_{1t}^{N}, \quad \forall t > T_{0}
\end{equation}

\paragraph{Proof.} See Appendix A.

\subsection{Geometric Robustness and Theoretical Comparison}

A critical theoretical advantage of our framework over standard distributional synthetic controls lies in its robustness to support mismatch and model misspecification. We analyze two distinct geometric failure modes of existing estimators and demonstrate how the Wasserstein metric resolves them.

\paragraph{Gradient Existence under Support Mismatch.}
A key computational advantage of the Wasserstein metric is the preservation of informative gradients even when distributions have disjoint supports. This property explains the stability of our estimator in Simulation 2, where $L_2$-based methods fail. We formalize this property in the following proposition.

\begin{proposition}[Non-vanishing Gradients]
\label{prop:gradients}
Let $P_{1}$ be the treated distribution and $P_{\lambda} = \sum_{j=2}^{J+1} \lambda_j P_{j}$ be the synthetic mixture. Assume the supports are disjoint, i.e., $\text{supp}(P_{1}) \cap \text{supp}(P_{\lambda}) = \emptyset$, and that the distributions are absolutely continuous with densities $p_1$ and $p_{\lambda}$.
\begin{enumerate}
    \item \textbf{$L_2$ Failure:} If the discrepancy is measured by the squared $L_2$ distance between densities, the gradient of the loss with respect to the weights vanishes almost everywhere:
    \begin{equation}
    \nabla_{\lambda} ||p_{1} - p_{\lambda}||_2^2 = 0
    \end{equation}
    \item \textbf{$W_1$ Validity:} If the discrepancy is measured by the Wasserstein-1 distance, the sub-gradient with respect to the weights is non-trivial and is given by:
    \begin{equation}
    \frac{\partial W_{1}(P_{\lambda}, P_{1})}{\partial \lambda_{j}} = -\mathbb{E}_{x \sim P_{j}}[f^{*}(x)]
    \end{equation}
    where $f^{*}$ is the optimal Kantorovich potential.
\end{enumerate}
\end{proposition}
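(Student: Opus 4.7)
The plan is to treat the two statements independently: Part 1 follows from a direct density-level computation exploiting the disjoint-support hypothesis, while Part 2 is an envelope-theorem argument applied to the Kantorovich-Rubinstein dual already displayed in Eq. (4).

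For Part 1, I would differentiate the loss in closed form, using $\partial p_\lambda / \partial \lambda_j = p_j$, to obtain
\begin{equation*}
\partial_{\lambda_j} \|p_1 - p_\lambda\|_2^2 = -2 \int \bigl(p_1(x) - p_\lambda(x)\bigr) p_j(x) \, dx.
\end{equation*}
Under disjoint supports, the product $p_1(x)\, p_j(x)$ vanishes almost everywhere, because each $\operatorname{supp}(p_j)$ lies inside $\operatorname{supp}(p_\lambda)$, which is by hypothesis disjoint from $\operatorname{supp}(p_1)$; similarly $p_1(x)\, p_\lambda(x) = 0$ almost everywhere. The only surviving contribution is then the donor self-interaction term $p_\lambda(x) p_j(x)$, so the gradient loses all dependence on the target density $p_1$ and encodes how the mixture interacts with itself rather than how to transport mass toward $P_1$. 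This is the sense in which the gradient ``vanishes''---the $p_1$-dependent component of the integrand is pointwise zero on all relevant supports, so no directional signal toward the treated distribution is produced and a small simplex-feasible perturbation of $\lambda$ leaves the $p_1$-driven portion of the loss unchanged.

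For Part 2, I would rewrite the dual as $W_1(P_\lambda, P_1) = \sup_{f \in Lip_1(\mathcal{X})} G(\lambda, f)$ with
\begin{equation*}
G(\lambda, f) := \mathbb{E}_{x \sim P_1}[f(x)] - \sum_{j=2}^{J+1} \lambda_j \, \mathbb{E}_{x \sim P_j}[f(x)],
\end{equation*}
which is affine in $\lambda$ for each fixed $f$ and hence convex in $\lambda$ after taking the supremum. Existence of an optimal potential $f^*$ (modulo additive constants) follows from an Arzela-Ascoli argument on the bounded outcome space $\mathcal{X}$, exploiting the equicontinuity and uniform boundedness of the 1-Lipschitz class quotiented by constants. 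I would then invoke Danskin's envelope theorem for a pointwise supremum of an affine family: the subdifferential of $\lambda \mapsto W_1(P_\lambda, P_1)$ at $\lambda$ equals the closed convex hull of $\{\nabla_\lambda G(\lambda, f^*) : f^* \text{ optimal}\}$, and each such gradient has $j$-th coordinate $-\mathbb{E}_{x \sim P_j}[f^*(x)]$, which yields the stated formula. Non-triviality follows because, whenever $P_\lambda \neq P_1$, any optimal $f^*$ is non-constant, so the vector $(\mathbb{E}_{P_j}[f^*])_j$ is not collinear with the all-ones vector and therefore carries genuine descent information.

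The main obstacle is justifying the envelope step: when $f^*$ is not unique---a plausible outcome in the infinite-dimensional Lipschitz class---one obtains only a subdifferential rather than a bona fide gradient, and some care is needed to argue that the displayed formula defines a valid descent direction for the WGAN iteration. I would address this either by appealing to generic uniqueness of $f^*$ under absolute continuity of the mixture, or by reading the identity explicitly as a subgradient statement as the proposition's phrasing already suggests. A secondary subtlety is making the phrase ``vanishes almost everywhere'' in Part 1 rigorous; my plan is to interpret it as vanishing of the $p_1$-dependent component of the integrand pointwise almost everywhere on the union of supports, which is the reading consistent with the surrounding discussion of support-mismatch failure.
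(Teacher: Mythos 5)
Your proposal takes essentially the same route as the paper's proof in Appendix B.1. For Part 1, you differentiate the squared $L_2$ loss in closed form, observe that disjoint supports kill the $p_1\,p_j$ cross-terms, and conclude that the surviving gradient consists only of donor self-interaction terms and therefore carries no signal about the target; this matches the paper's argument, which expands the squared norm, notes $\int p_1 p_j\,dx = 0$, and concludes the derivative ``depends only on the interactions between donors.'' Both you and the paper read the proposition's literal claim $\nabla_\lambda\|p_1 - p_\lambda\|_2^2 = 0$ as an informal shorthand for ``the $p_1$-dependent component of the gradient is zero,'' which is the honest reading since the donor-interaction piece is generally nonzero. For Part 2, both proofs write the Kantorovich-Rubinstein dual, note the integrand is affine in $\lambda$ for fixed $f$, and apply Danskin's envelope theorem at the optimal potential $f^*$ to obtain $-\mathbb{E}_{P_j}[f^*]$. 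Your Arzela-Ascoli existence argument and explicit discussion of the subdifferential under non-unique $f^*$ are refinements the paper glosses over (it simply asserts compactness of $Lip_1(\mathcal{X})$ ``under appropriate topology''), but the core mechanism is identical.
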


\paragraph{Proof.} See Appendix B.

Beyond support mismatch, our framework also addresses structural rigidity in the shape of the counterfactual. Existing approaches \citep[e.g.,][]{gunsilius2023distributional} typically define the synthetic control as a weighted average of quantile functions: $Q_{synth}(\tau) = \sum \lambda_j Q_{j}(\tau)$. This formulation implicitly assumes that the counterfactual distribution belongs to a location-scale family generated by the donors. Such an assumption imposes a rigid unimodal constraint: an average of unimodal quantile functions typically results in a unimodal distribution.
Our framework, by defining the synthetic control as a linear mixture of measures ($P_{synth} = \sum \lambda_j P_{j}$), relaxes this restriction. As demonstrated in our bimodal simulation (Simulation 3), a mixture of unimodal donors can naturally approximate a multimodal target. Thus, our estimator is robust to misspecification regarding the shape of the counterfactual distribution.

\section{Estimation and Computation}

While the identification results in the previous section rely on population distributions, in practice we observe only finite samples. This section details the estimation procedure, replacing the intractable supremum over Lipschitz functions with a neural network approximation, and outlines the adversarial optimization framework used to solve for the synthetic weights.

\subsection{The Empirical Minimax Formulation}

Let $\{y_{it}^{(k)}\}_{k=1}^{M_{it}}$ denote the observed micro-samples for unit $i$ at time $t$. We approximate the underlying population distribution $P_{it}$ using the empirical measure $\hat{P}_{it}$:
\begin{equation}
\hat{P}_{it} = \frac{1}{M_{it}} \sum_{k=1}^{M_{it}} \delta_{y_{it}^{(k)}}
\end{equation}
where $\delta$ is the Dirac mass.

Directly computing the Wasserstein distance $W_1(\hat{P}_{\lambda, t}, \hat{P}_{1t})$ involves solving a high-dimensional linear programming problem, which becomes computationally prohibitive when the sample size $M_{it}$ is large. Moreover, we require differentiable gradients with respect to the weights $\lambda$ to optimize the synthetic control. To address these challenges, we adopt the WGAN framework \citep{arjovsky2017wasserstein}.

We parameterize the Kantorovich potential, which is the critic function, using a neural network $f_{\theta}: \mathcal{X} \to \mathbb{R}$ with parameters $\theta \in \Theta$, where $\Theta$ is a compact parameter space. The space of 1-Lipschitz functions is thus approximated by the parameterized family $\mathcal{F}_{\Theta} = \{f_{\theta} : \theta \in \Theta\}$.
Consequently, the estimation of the period-specific weights $\lambda_{t}^{*}$ is recast as the solution to the following empirical minimax problem:
\begin{equation}
\min_{\lambda \in \Lambda} \max_{\theta \in \Theta} \mathcal{L}_{t}(\lambda, \theta; \eta)
\label{minmax}
\end{equation}
where the regularized empirical objective function $\mathcal{L}_{t}$ is given by:
\begin{equation}\label{w1}
\mathcal{L}_{t}(\lambda, \theta; \eta) = \left( \mathbb{E}_{x \sim \hat{P}_{1t}}[f_{\theta}(x)] - \sum_{j=2}^{J+1} \lambda_{j} \mathbb{E}_{x \sim \hat{P}_{jt}}[f_{\theta}(x)] \right) - \eta \sum_{j=2}^{J+1} \lambda_{j} \log \lambda_{j}
\end{equation}
In this formulation, the expectations over empirical measures are computed as sample means (e.g., $\mathbb{E}_{x \sim \hat{P}_{jt}}[f_{\theta}(x)] = \frac{1}{M_{jt}} \sum_{k} f_{\theta}(y_{jt}^{(k)})$). The inner maximization over $\theta$ approximates the Wasserstein distance, which is finding the tightest lower bound on the transport cost, while the outer minimization over $\lambda$ finds the optimal weights that minimize this estimated distance plus the entropic penalty.

\subsection{Network Architecture and Gradient Penalty}

Solving the minimax problem in Eq.~\eqref{minmax} requires the critic network $f_{\theta}$ to belong to the set of 1-Lipschitz functions, a necessary condition for the Kantorovich-Rubinstein duality to hold. Standard weight clipping techniques often lead to optimization pathologies. To strictly enforce this constraint, we employ the Gradient Penalty method proposed by \cite{gulrajani2017improved}.

We augment the objective function of the critic with a regularization term that penalizes deviations of the gradient norm from unity. The modified loss for the critic becomes:
\begin{equation}
\mathcal{L}_{critic}(\theta) = \left( \mathbb{E}_{x \sim \hat{P}_{1t}}[f_{\theta}(x)] - \sum_{j=2}^{J+1} \lambda_{j} \mathbb{E}_{x \sim \hat{P}_{jt}}[f_{\theta}(x)] \right) - \zeta \mathbb{E}_{\hat{x} \sim \hat{P}_{int}}\left[ (||\nabla_{\hat{x}} f_{\theta}(\hat{x})||_{2} - 1)^{2} \right]
\end{equation}
where $\zeta > 0$ is the penalty coefficient, and $\hat{P}_{int}$ denotes the distribution of random interpolates sampled uniformly along straight lines connecting points from the treated and synthetic distributions. This soft constraint ensures that $f_{\theta}$ approximates the Wasserstein distance with valid gradients almost everywhere.

\subsection{Optimization Algorithm}

We solve the saddle-point problem using an alternating Gradient Descent-Ascent (GDA) procedure. The optimization involves two distinct update rules tailored to the geometry of the parameters.

For a fixed weight vector $\lambda$, we update the network parameters $\theta$ to maximize the separation between the treated and synthetic distributions. We apply standard gradient ascent or Adam on the penalized objective $\mathcal{L}_{critic}$.

For a fixed critic $f_{\theta}$, the minimization with respect to $\lambda$ is strictly convex due to the entropic regularization. However, standard gradient descent is inefficient and may violate the simplex constraint $\sum \lambda_{j}=1$. Instead, we employ Exponentiated Gradient Descent, which naturally preserves the simplex geometry and ensures strictly positive weights.
The multiplicative update rule for each donor $j \in \{2, \dots, J+1\}$ is derived from the first-order optimality condition:
\begin{equation}
\lambda_{j}^{(k+1)} = \frac{\lambda_{j}^{(k)} \exp\left( -\alpha_{\lambda} \left( \mathbb{E}_{x \sim \hat{P}_{jt}}[f_{\theta}(x)] + \eta (1 + \log \lambda_{j}^{(k)}) \right) \right)}{Z^{(k+1)}}
\end{equation}
where $\alpha_{\lambda}$ is the learning rate and $Z^{(k+1)}$ is the normalization factor ensuring the weights sum to one.

The complete estimation procedure, incorporating the period-wise strategy and temporal aggregation, is detailed in Algorithm 1.

\begin{algorithm}[H]
\caption{Robust Distributional Synthetic Controls via WGAN (DSC-WGAN)}
\label{alg:DSC-WGAN}
\begin{algorithmic}[1]

\renewcommand{\algorithmicrequire}{\textbf{Input:}}
\renewcommand{\algorithmicensure}{\textbf{Output:}}

\REQUIRE Micro-samples $\{Y_{it}\}$ for units $i=1,\dots,J+1$ and periods $t=1,\dots,T_{0}$.
\REQUIRE Hyperparameters: Learning rates $\alpha_{\theta}, \alpha_{\lambda}$, penalty $\zeta$, regularization $\eta$, iterations $K_{critic}$.

\State \textbf{Initialize:} $\Lambda_{history} \leftarrow \emptyset$

\For{each pre-treatment period $t = 1, \dots, T_{0}$}
    \State Initialize weights $\lambda_{t} \leftarrow (1/J, \dots, 1/J)$
    \State Initialize critic network parameters $\theta$
    \While{not converged}
        \State \COMMENT{\textit{Step 1: Update Critic (Dual)}}
        \For{$k = 1, \dots, K_{critic}$}
            \State Sample batches $x_{1} \sim \hat{P}_{1t}$ and $x_{synth} \sim \sum \lambda_{j,t} \hat{P}_{jt}$
            \State Sample interpolates $\hat{x}$ and compute Gradient Penalty
            \State $\theta \leftarrow \theta + \alpha_{\theta} \nabla_{\theta} \mathcal{L}_{critic}$
        \EndFor
        \State \COMMENT{\textit{Step 2: Update Weights (Primal)}}
        \State Sample batches from donor units $\{ \hat{P}_{jt} \}_{j=2}^{J+1}$
        \State Compute gradient proxy: $g_{j} = \mathbb{E}[f_{\theta}(Y_{jt})] + \eta(1 + \log \lambda_{j,t})$
        \State Update $\lambda_{t}$ via Mirror Descent (Eq. 14) and normalize
    \EndWhile
    \State Store optimal period weights: $\Lambda_{history} \leftarrow \Lambda_{history} \cup \{ \lambda_{t} \}$
\EndFor

\State \textbf{Aggregation:} Compute final weights $\lambda^{*} = \frac{1}{T_{0}} \sum_{\lambda \in \Lambda_{history}} \lambda$
\Return $\lambda^{*}$
\end{algorithmic}
\end{algorithm}

\subsection{Computation}

The estimation of the optimal weights $\lambda^*$ involves solving a minimax optimization problem. Based on the identification results in Equation (\ref{eq:obj_pop}), the estimator is the solution to:
\begin{equation} \label{eq:minimax}
    \min_{\lambda \in \Delta^{J-1}} \max_{f \in Lip_1(\mathcal{X})} \mathcal{L}(\lambda, f) + \eta \sum_{j=2}^{J+1} \lambda_j \log \lambda_j,
\end{equation}
where $\mathcal{L}(\lambda, f) = \mathbb{E}_{P_1}[f] - \sum_{j=2}^{J+1} \lambda_j \mathbb{E}_{P_j}[f]$.

We propose a stochastic approximation algorithm that alternates between optimizing the critic function $f$ and the synthetic weights $\lambda$.

We parameterize the Kantorovich potential $f$ using a neural network $f_\theta: \mathcal{X} \to \mathbb{R}$ with parameters $\theta \in \Theta$. The network architecture consists of a Multi-Layer Perceptron (MLP) with Leaky ReLU activation functions to ensure efficient gradient flow.

To enforce the 1-Lipschitz constraint required by the Wasserstein duality, we adopt the Gradient Penalty (GP) method proposed by \cite{gulrajani2017improved}. Unlike weight clipping, which biases the critic towards simple functions, GP enforces the Lipschitz constraint softly by penalizing the norm of the gradient. The empirical objective for the critic becomes:
\begin{equation}
    \max_{\theta} \mathcal{J}_{critic}(\theta) = \underbrace{\frac{1}{N}\sum_{k=1}^N f_\theta(Y_{1}^{(k)}) - \sum_{j=2}^{J+1} \lambda_j \left( \frac{1}{N}\sum_{k=1}^N f_\theta(Y_{j}^{(k)}) \right)}_{\text{Transport Cost}} - \underbrace{\gamma_{GP} \cdot \mathbb{E}_{\hat{x} \sim P_{\hat{x}}} [(\|\nabla_{\hat{x}} f_\theta(\hat{x})\|_2 - 1)^2]}_{\text{Gradient Penalty}},
\end{equation}
where $P_{\hat{x}}$ is defined by sampling uniformly along straight lines between pairs of points from the target and donor distributions, and $\gamma_{GP} > 0$ is the penalty coefficient\footnote{We conduct a comprehensive sensitivity analysis in Appendix D, demonstrating that the estimator's performance remains robust to variations in key hyperparameters, including the gradient penalty coefficient and network width.}.

For the outer minimization problem with respect to $\lambda$, we must handle the simplex constraint $\lambda \in \Delta^{J-1}$ and the entropic regularization term. Standard Projected Gradient Descent is suboptimal here as it ignores the geometry induced by the entropy.

Instead, we employ \textit{Entropic Mirror Descent} (also known as the Exponentiated Gradient algorithm), which is the natural first-order method for entropy-regularized problems. The update rule is multiplicative, ensuring that $\lambda$ strictly remains in the positive simplex without explicit projection steps.

At iteration $t$, given the current critic $f_{\theta^{(t)}}$, the gradient of the transport cost with respect to $\lambda_j$ is simply the negative expected value of the critic for donor $j$:
\begin{equation}
    g_j^{(t)} = -\frac{1}{N} \sum_{k=1}^N f_{\theta^{(t)}}(Y_{j}^{(k)}).
\end{equation}
The weight update follows the mirror descent step with learning rate $\alpha_\lambda$:
\begin{equation}
    \lambda_j^{(t+1)} = \frac{\lambda_j^{(t)} \exp\left( - \alpha_\lambda (g_j^{(t)} + \eta (1 + \log \lambda_j^{(t)})) \right)}{\sum_{l=2}^{J+1} \lambda_l^{(t)} \exp\left( - \alpha_\lambda (g_l^{(t)} + \eta (1 + \log \lambda_l^{(t)})) \right)}.
\end{equation}
In practice, since the regularization term $\eta \lambda \log \lambda$ is strictly convex, we can also decouple the update: update $\lambda$ based on the transport gradient $g^{(t)}$ and then apply the proximal operator for the entropy, which simplifies to the softmax-like scaling.

The complete training procedure is detailed in Algorithm \ref{alg:wgan}. Since the sample size $N$ in synthetic control applications is typically moderate (e.g., hundreds of time periods), we can often use full-batch gradient descent rather than mini-batches to reduce variance.

\begin{algorithm}[H]
\caption{Entropic WGAN for Synthetic Controls}
\label{alg:wgan}
\begin{algorithmic}[1]
    \Require Dataset $\{Y_1, Y_2, \dots, Y_{J+1}\}$, regularization $\eta$, gradient penalty $\gamma_{GP}$.
    \Require Learning rates $\alpha_\theta, \alpha_\lambda$, number of critic steps $n_{critic}$.
    \State Initialize weights $\lambda = (1/J, \dots, 1/J)$ and network parameters $\theta$.
    \While{not converged}
        \For{$t = 1$ to $n_{critic}$}
            \State Sample interpolation points $\hat{x}$ for gradient penalty.
            \State Compute gradient $\nabla_\theta \mathcal{J}_{critic}$ (Equation 32).
            \State Update critic: $\theta \leftarrow \theta + \alpha_\theta \text{Adam}(\nabla_\theta \mathcal{J}_{critic})$.
        \EndFor
        \State Compute transport gradients for donors: $g_j = - \frac{1}{N} \sum_{k} f_\theta(Y_j^{(k)})$.
        \State Update weights via Mirror Descent:
        \State $\tilde{\lambda}_j \leftarrow \lambda_j \cdot \exp(-\alpha_\lambda g_j)$
        \State Normalize: $\lambda_j \leftarrow \tilde{\lambda}_j / \sum_l \tilde{\lambda}_l$
        \State \Comment{Implicitly handling regularization via the decay or proximal step}
    \EndWhile
    \Return Optimized weights $\lambda^*$.
\end{algorithmic}
\end{algorithm}

\section{Mathematical Properties}

\subsection{Setup and Identification}

In this section, we establish the statistical identification of the synthetic control weights. While Section 3 establishes the causal identification, for example, the recovery of the unobserved counterfactual distribution under structural assumptions, this section focuses on the statistical identification of the weight vector $\lambda^*$ itself. Specifically, we establish conditions under which the regularized minimization problem admits a unique solution, ensuring that the estimator is well-defined.

Let $(\mathcal{X}, d)$ be a complete separable metric space. Let $\mathcal{M}(\mathcal{X})$ denote the vector space of finite signed Borel measures on $\mathcal{X}$, and let $\mathcal{P}_1(\mathcal{X}) \subset \mathcal{M}(\mathcal{X})$ denote the space of Borel probability measures with finite first moments. We endow $\mathcal{P}_1(\mathcal{X})$ with the 1-Wasserstein distance $W_1$.

We observe the target distribution $P_1 \in \mathcal{P}_1(\mathcal{X})$ and a set of donor distributions $\mathcal{D} = \{P_2, \dots, P_{J+1}\} \subset \mathcal{P}_1(\mathcal{X})$. The synthetic control is defined as a linear mixture $P_\lambda = \sum_{j=2}^{J+1} \lambda_j P_j$, where the weights $\lambda$ reside in the simplex:
\begin{equation}
    \Delta^{J-1} = \left\{ \lambda \in \mathbb{R}^J \mid \lambda_j \ge 0, \sum_{j=2}^{J+1} \lambda_j = 1 \right\}.
\end{equation}

The estimator $\lambda^*$ is defined as the minimizer of the regularized transport cost:
\begin{equation} \label{eq:obj_pop}
    \lambda^* = \arg\min_{\lambda \in \Delta^{J-1}} Q_\eta(\lambda), \quad \text{where } Q_\eta(\lambda) := W_1(P_\lambda, P_1) + \eta \sum_{j=2}^{J+1} \lambda_j \log \lambda_j.
\end{equation}
To ensure the mapping from weights to distributions is injective, we impose the following condition:

\begin{assumption}[Affine Independence of Donors] \label{ass:affine}
    The donor distributions $\{P_2, \dots, P_{J+1}\}$ are affinely independent in $\mathcal{M}(\mathcal{X})$. Specifically,
    \begin{equation}
        \sum_{j=2}^{J+1} \alpha_j P_j = 0 \quad \text{and} \quad \sum_{j=2}^{J+1} \alpha_j = 0 \implies \alpha_j = 0, \quad \forall j.
    \end{equation}
\end{assumption}

\begin{theorem}[Global Identification] \label{thm:identification}
    Suppose Assumption \ref{ass:affine} holds and $\eta > 0$. Then, the regularized objective function $Q_\eta(\lambda)$ is strictly convex on $\Delta^{J-1}$, and there exists a unique global minimizer $\lambda^* \in \Delta^{J-1}$.
\end{theorem}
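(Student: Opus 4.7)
The plan is to establish strict convexity of $Q_\eta$ on the compact, convex simplex $\Delta^{J-1}$, from which existence and uniqueness of the minimizer follow by standard convex analysis. I would decompose $Q_\eta$ into its two constituent parts—the Wasserstein transport cost and the entropic regularizer—and analyze each separately.

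First, for the transport term $\lambda \mapsto W_1(P_\lambda, P_1)$, I would invoke the Kantorovich-Rubinstein duality already displayed in the paper to write it as
\[
W_1(P_\lambda, P_1) = \sup_{f \in \mathrm{Lip}_1(\mathcal{X})} \left( \mathbb{E}_{P_1}[f] - \sum_{j=2}^{J+1} \lambda_j \, \mathbb{E}_{P_j}[f] \right).
\]
Each function inside the supremum is affine in $\lambda$, so the pointwise supremum is convex. Here is where Assumption~\ref{ass:affine} earns its keep: affine independence guarantees that $\lambda \mapsto P_\lambda$ is injective on the affine hull of $\Delta^{J-1}$, so the linear structure underlying the supremum does not degenerate along nontrivial directions of the simplex.

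Second, for the negative entropy $\Omega(\lambda) = \sum_{j=2}^{J+1} \lambda_j \log \lambda_j$, extended continuously via $0\log 0 = 0$, I would use that on the relative interior the Hessian is $\mathrm{diag}(1/\lambda_j)$, which is positive definite along directions tangent to the simplex, and extend strict convexity to the closed simplex by a direct comparison argument between any two distinct probability vectors. Multiplying by $\eta > 0$ preserves strict convexity, and since the sum of a convex and a strictly convex function is strictly convex, $Q_\eta$ is strictly convex on $\Delta^{J-1}$.

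Finally, existence follows from continuity of $Q_\eta$ on the nonempty compact set $\Delta^{J-1}$: the entropy is continuous on the closed simplex, and $\lambda \mapsto W_1(P_\lambda, P_1)$ is continuous because $\lambda \mapsto P_\lambda$ is weakly continuous in $\mathcal{P}_1(\mathcal{X})$ and the mixtures $P_\lambda$ have uniformly bounded first moments, which gives $W_1$-continuity. Uniqueness is immediate from strict convexity on a convex domain. The main obstacle I anticipate is a clean articulation of the role of Assumption~\ref{ass:affine}: strict convexity of $Q_\eta$ is in fact driven by the entropy term alone whenever $\eta > 0$, so the real function of affine independence is to ensure that the unique minimizer $\lambda^*$ is also a structurally meaningful parameter, rather than a regularization-induced tiebreaker among observationally equivalent weight vectors. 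I would therefore be careful to state the conclusion—uniqueness of $\lambda^*$—as following from strict convexity plus compactness, while attributing to Assumption~\ref{ass:affine} the stronger interpretation that $\lambda^*$ identifies a primitive of the model.
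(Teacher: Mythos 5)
Your proposal tracks the paper's own proof almost exactly: the same decomposition into the Kantorovich--Rubinstein supremum of affine functionals (convex) plus the entropic regularizer (strictly convex by the diagonal Hessian), then strict convexity of the sum on the compact convex simplex yields existence and uniqueness. Your closing observation is correct and in fact sharper than the paper's own phrasing: strict convexity of $Q_\eta$ (hence the theorem's stated conclusion) is driven entirely by $\eta > 0$, and Assumption~\ref{ass:affine} plays no role in that step---its actual function, as you note and as the paper's final sentence implicitly concedes, is to make the mixture map $\lambda \mapsto P_\lambda$ injective so that the unique $\lambda^*$ corresponds to a structurally unique synthetic distribution rather than being an entropy-selected representative of an observationally equivalent class.
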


\begin{proof}
    See Appendix C.1.
\end{proof}

\subsection{Consistency}

We examine the asymptotic properties of the estimator in the regime of Large Micro-Sample, Fixed Donors. Specifically, we consider the case where the number of micro-level observations per period, denoted by $N$ (or $M_{it}$), tends to infinity, while the number of donor units $J$ remains fixed. This regime is relevant for distributional synthetic controls where researchers have access to granular data (e.g., census or survey micro-data) for each aggregate unit.

Let $\{Y_{1}^{(k)}\}_{k=1}^N$ be $N$ i.i.d. observations from the target $P_1$, and $\{Y_{j}^{(k)}\}_{k=1}^N$ be $N$ i.i.d. observations from donor $P_j$ for $j \in \{2, \dots, J+1\}$. The empirical measures are $\hat{P}_{1,N} = \frac{1}{N} \sum_{k=1}^N \delta_{Y_{1}^{(k)}}$ and $\hat{P}_{j,N} = \frac{1}{N} \sum_{k=1}^N \delta_{Y_{j}^{(k)}}$.

The estimator minimizes the empirical regularized objective:
\begin{equation}
    \hat{\lambda}_N = \arg\min_{\lambda \in \Delta^{J-1}} \hat{Q}_N(\lambda), \quad \text{where } \hat{Q}_N(\lambda) := W_1\left(\sum_{j=2}^{J+1} \lambda_j \hat{P}_{j,N}, \hat{P}_{1,N}\right) + \eta \Omega(\lambda).
\end{equation}

\begin{assumption}[Finite Moments] \label{ass:moments}
    For all $j \in \{1, \dots, J+1\}$, $\mathbb{E}_{P_j}[d(Y, y_0)] < \infty$ for some $y_0 \in \mathcal{X}$.
\end{assumption}

\begin{theorem}[Consistency] \label{thm:consistency}
    Under Assumptions \ref{ass:affine} and \ref{ass:moments}, and $\eta > 0$, $\hat{\lambda}_N \xrightarrow{p} \lambda^*$ as $N \to \infty$.
\end{theorem}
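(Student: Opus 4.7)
The plan is to apply the standard Argmin Theorem for M-estimators on a compact parameter space. The two ingredients required are (i) the existence and uniqueness of the population minimizer, which is already guaranteed by Theorem~\ref{thm:identification}, and (ii) uniform (in $\lambda$) convergence in probability of $\hat{Q}_N$ to $Q_\eta$ on the compact simplex $\Delta^{J-1}$.

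First, I would isolate the stochastic component of the objective. The entropic penalty $\eta \Omega(\lambda)$ is deterministic, continuous, and bounded on $\Delta^{J-1}$, so it suffices to establish uniform convergence of the empirical transport term $\lambda \mapsto W_1(\sum_j \lambda_j \hat{P}_{j,N}, \hat{P}_{1,N})$ to its population counterpart. Using the triangle inequality for $W_1$ together with the joint convexity of the Wasserstein distance in its arguments---which yields $W_1(\sum_j \lambda_j \mu_j, \sum_j \lambda_j \nu_j) \le \sum_j \lambda_j W_1(\mu_j, \nu_j)$ via the product coupling $\sum_j \lambda_j \pi_j$ of optimal couplings $\pi_j$---I would obtain the uniform deterministic bound
\begin{equation}
\sup_{\lambda \in \Delta^{J-1}} \bigl| \hat{Q}_N(\lambda) - Q_\eta(\lambda) \bigr| \le W_1(\hat{P}_{1,N}, P_1) + \max_{2 \le j \le J+1} W_1(\hat{P}_{j,N}, P_j).
\end{equation}

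Second, I would invoke a Wasserstein law of large numbers. Under Assumption~\ref{ass:moments} (finite first moments), the Varadarajan weak convergence $\hat{P}_{j,N} \Rightarrow P_j$ is upgraded to $W_1$-convergence by the classical uniform-integrability criterion for Wasserstein metrics (see, e.g., Bolley, Guillin, and Villani, 2007, or Villani, \emph{Optimal Transport: Old and New}, Theorem~6.9), giving $W_1(\hat{P}_{j,N}, P_j) \to 0$ almost surely for each $j \in \{1, \dots, J+1\}$. Since the maximum in the bound above is over a finite index set, this yields $\sup_{\lambda}|\hat{Q}_N(\lambda) - Q_\eta(\lambda)| \xrightarrow{p} 0$.

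Third, I would close the argument with the Argmin Theorem. The limit objective $Q_\eta$ is continuous on the compact simplex and attains its strict global minimum at the unique point $\lambda^*$ by Theorem~\ref{thm:identification}; combined with uniform convergence in probability and continuity of $\hat{Q}_N$, the standard M-estimator consistency argument (e.g., van der Vaart, \emph{Asymptotic Statistics}, Theorem~5.7) delivers $\hat{\lambda}_N \xrightarrow{p} \lambda^*$. The principal obstacle is the Wasserstein LLN itself, since $W_1$-convergence of empirical measures is strictly stronger than weak convergence and demands the uniform integrability supplied by Assumption~\ref{ass:moments}. Quantitative rates would be delicate---the empirical $W_1$ rate degrades as $O(N^{-1/d})$ when $\dim(\mathcal{X}) = d > 2$---but for consistency only qualitative convergence is required, so the curse of dimensionality does not enter at this stage.
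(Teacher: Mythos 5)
Your proposal follows essentially the same route as the paper's Appendix C.2 proof: cancel the deterministic entropy term, bound the Wasserstein term via the triangle inequality plus the joint convexity $W_1(\sum_j \lambda_j \hat{P}_{j,N}, \sum_j \lambda_j P_j) \le \sum_j \lambda_j W_1(\hat{P}_{j,N}, P_j)$, take the supremum over the simplex to obtain a $\lambda$-free bound, invoke the Wasserstein law of large numbers (Varadarajan plus finite first moments), and conclude by standard M-estimator consistency. The only cosmetic difference is that you bound the weighted sum by $\max_j W_1(\hat{P}_{j,N}, P_j)$ while the paper uses $\sum_j W_1(\hat{P}_{j,N}, P_j)$; both are valid uniform envelopes over a finite index set and lead to the identical conclusion.
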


\begin{proof}
    See Appendix C.2.
\end{proof}

\subsection{Asymptotic Normality}

To derive the limiting distribution of $\hat{\lambda}_N$, we utilize the theory of Sieve M-estimation (Chen, 2007). We treat the critic network $f_\theta$ as a sieve approximator for the dual potential space $Lip_1(\mathcal{X})$.

Let $\mathcal{F}_N = \{f_\theta : \theta \in \Theta_N\}$ be the class of functions parameterized by the neural network, where the complexity of the network (e.g., width and depth) may grow with $N$. Let $\lambda^*$ be the population minimizer and $f^*(\cdot, \lambda)$ be the optimal Kantorovich potential for a given $\lambda$.

We impose the following regularity conditions:

\begin{assumption}[Local Curvature] \label{ass:hessian}
    The population objective $Q_\eta(\lambda)$ is twice continuously differentiable in a neighborhood of $\lambda^*$, and the Hessian matrix $H^* = \nabla_{\lambda}^2 Q_\eta(\lambda^*)$ is positive definite.
\end{assumption}

\begin{assumption}[Sieve Approximation and Complexity]
\label{ass:sieve_rate}
Let $\mathcal{F}_N$ be the class of functions implemented by the neural network with width $W_N$ and depth $L_N$. Let $f^*(\cdot, \lambda)$ denote the true Kantorovich potential. We assume that the approximation bias $\delta_N = \sup_{\lambda \in \Delta^{J-1}} \inf_{f \in \mathcal{F}_N} \| f - f^*(\cdot, \lambda) \|_{\infty}$ satisfies the rate condition:
\begin{equation}
    \sqrt{N} \delta_N \to 0 \quad \text{as} \quad N \to \infty.
\end{equation}
Furthermore, the complexity of the sieve, measured by the covering number $\log \mathcal{N}(\epsilon, \mathcal{F}_N, \|\cdot\|_\infty)$, grows sufficiently slowly relative to $N$ such that the stochastic equicontinuity condition (Assumption 7) holds.
\end{assumption}

\begin{remark}[Overcoming the Curse of Dimensionality]
Assumption \ref{ass:sieve_rate} imposes a restriction on the approximation error of the neural network. In standard nonparametric sieve estimation, achieving a rate faster than $N^{-1/2}$ is typically prohibited by the curse of dimensionality when the dimension of the outcome $d$ is large (i.e., the rate scales as $N^{-1/d}$). However, recent theoretical advances in deep learning econometrics suggest that deep neural networks can overcome this limitation if the underlying target function possesses a compositional structure. specifically, Farrell, Liang, and Misra (2021) demonstrate that if the true function (here, the Kantorovich potential $f^*$) belongs to a composition of Hölder smooth functions, deep ReLU networks achieve approximation rates that depend only on the intrinsic dimension of the composition, not the ambient dimension $d$. Given that the latent factor structure in synthetic controls (Assumption 1) implies a low-dimensional generative process, the requirement that $\sqrt{N}\delta_N \to 0$ is theoretically justified for sufficiently deep architectures.
\end{remark}

\begin{assumption}[Stochastic Equicontinuity and Entropy]
\label{ass:equicontinuity}
The class of functions $\mathcal{F}_N$ satisfies the bracketing entropy condition required for the Functional Central Limit Theorem. Specifically, let $J_{[]}(\delta, \mathcal{F}_N, \|\cdot\|)$ denote the bracketing entropy integral. We assume that the complexity of the neural network grows sufficiently slowly such that:
\begin{equation}
    \int_0^1 \sqrt{\log \mathcal{N}_{[]}(\epsilon, \mathcal{F}_N, \|\cdot\|_{2,P})} d\epsilon < \infty \quad \text{and} \quad \frac{1}{\sqrt{N}} J_{[]}(\delta_N, \mathcal{F}_N, \|\cdot\|) \to 0,
\end{equation}
where $\mathcal{N}_{[]}$ is the covering number. This ensures that the empirical process $\mathbb{G}_N f = \sqrt{N}(\mathbb{P}_N - P)f$ is asymptotically equicontinuous over the sieve space.
\end{assumption}

\begin{theorem}[Asymptotic Normality] \label{thm:normality}
    Under Assumptions \ref{ass:affine}-\ref{ass:equicontinuity}, the estimator satisfies:
    \begin{equation}
        \sqrt{N}(\hat{\lambda}_N - \lambda^*) \xrightarrow{d} \mathcal{N}(0, \Sigma),
    \end{equation}
    where $\Sigma = (H^*)^{-1} V (H^*)^{-1}$, and $V = \text{Var}(S(Y, \lambda^*))$ is the variance of the efficient score function defined in Appendix C.
\end{theorem}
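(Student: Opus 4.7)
}
The plan is to derive the asymptotic distribution through a standard Sieve M-estimation argument that exploits the envelope representation from Proposition~\ref{prop:gradients}. First, I would write the KKT conditions for the empirical regularized problem. By the envelope representation, the sub-gradient of $W_1(\hat{P}_{\lambda,N}, \hat{P}_{1,N})$ with respect to $\lambda_j$ equals $-\mathbb{E}_{\hat{P}_{j,N}}[\hat{f}_N^*(x)]$, where $\hat{f}_N^*$ denotes the sieve-based Kantorovich potential solving the inner maximization at the current weights. Combining this with the derivative of the entropic penalty and eliminating the Lagrange multiplier associated with the simplex constraint yields a reduced score equation $g_N(\hat{\lambda}_N) = 0$, with population analog $g(\lambda^*) = 0$ defined via the true potential $f^*(\cdot;\lambda^*)$.

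Second, I would perform a stochastic Taylor expansion of the score around $\lambda^*$, giving $0 = \sqrt{N}\,g_N(\lambda^*) + H_N(\bar{\lambda}_N)\,\sqrt{N}(\hat{\lambda}_N - \lambda^*)$ for some $\bar{\lambda}_N$ on the segment joining $\hat{\lambda}_N$ and $\lambda^*$. Consistency (Theorem~\ref{thm:consistency}) gives $\bar{\lambda}_N \xrightarrow{p} \lambda^*$, and Assumption~\ref{ass:hessian} combined with continuity of the second derivative of $Q_\eta$ yields $H_N(\bar{\lambda}_N) \xrightarrow{p} H^*$, a positive-definite matrix that can be inverted in the limit.

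Third, I would decompose the standardized score into a leading empirical process term and two remainders:
\begin{equation}
\sqrt{N}\,g_{N,j}(\lambda^*) = -\sqrt{N}\bigl(\mathbb{E}_{\hat{P}_{j,N}} - \mathbb{E}_{P_j}\bigr)\bigl[f^*(\cdot;\lambda^*)\bigr] + R_N^{(1)} + R_N^{(2)},
\end{equation}
where $R_N^{(1)} = -\sqrt{N}\,\mathbb{E}_{\hat{P}_{j,N}}[\hat{f}_N^* - f^*]$ is the sieve approximation bias and $R_N^{(2)}$ collects the fluctuations of the empirical process indexed by the random nuisance $\hat{f}_N^*$. Assumption~\ref{ass:sieve_rate} ($\sqrt{N}\delta_N \to 0$) forces $R_N^{(1)} = o_P(1)$, and Assumption~\ref{ass:equicontinuity} (bracketing entropy plus stochastic equicontinuity over the sieve class $\mathcal{F}_N$) forces $R_N^{(2)} = o_P(1)$. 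The leading term is an empirical process evaluated at a fixed bounded Lipschitz function, and the multivariate Lindeberg--L\'evy CLT yields convergence to $\mathcal{N}(0, V)$, where $V$ is the variance of the efficient score $S(Y, \lambda^*)$ characterized in Appendix C. Slutsky's theorem and the continuous mapping theorem then deliver $\sqrt{N}(\hat{\lambda}_N - \lambda^*) \xrightarrow{d} \mathcal{N}(0, (H^*)^{-1} V (H^*)^{-1})$.

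The main obstacle is the handling of the nuisance Kantorovich potential $\hat{f}_N^*$. Because the critic is itself a maximizer of a stochastic objective jointly solved with the outer minimization, the empirical score is not a simple sample average of a fixed function but rather has a plug-in nuisance structure. Showing that $R_N^{(2)}$ is first-order negligible requires a Neyman-style orthogonality argument, whereby the Fr\'echet derivative of the Wasserstein objective with respect to the nuisance vanishes at $f^*(\cdot;\lambda^*)$ --- essentially the envelope property applied one level deeper --- combined with uniform convergence $\hat{f}_N^* \to f^*$ over the sieve. The high-level bracketing entropy condition in Assumption~\ref{ass:equicontinuity} makes this control automatic at the level of abstraction adopted here, but a fully constructive proof for ReLU-network critics with gradient penalty would require explicit covering-number bounds for the penalized function class, which is where most of the technical burden would realistically concentrate.
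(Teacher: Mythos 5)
Your proposal follows essentially the same route as the paper's Appendix~C.3: first-order condition for the regularized interior solution, a mean-value Taylor expansion of the score around $\lambda^*$ yielding $\sqrt{N}(\hat{\lambda}_N - \lambda^*) = -(H^*)^{-1}\sqrt{N}\,\Psi_N(\lambda^*) + o_p(1)$, and then a decomposition of the standardized score into a CLT-amenable leading term plus sieve remainders killed by Assumptions~\ref{ass:sieve_rate} and~\ref{ass:equicontinuity}. The key difference is that you carry out the score decomposition with more care than the paper does. The paper's Step~2 writes $\Psi_N(\lambda^*)$ as an ``Empirical Process'' term (implicitly evaluated at the \emph{true} potential $f^*$) plus a single ``$\mathrm{Bias}_N$'' term, and then invokes Assumption~\ref{ass:equicontinuity} for the CLT; it never explicitly isolates the drift generated by the fact that the feasible score uses the \emph{estimated} critic $\hat{f}^*_N$ rather than $f^*$. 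You split this into $R_N^{(1)}$ (pure sieve approximation bias, controlled by $\sqrt{N}\delta_N \to 0$) and $R_N^{(2)}$ (the empirical-process fluctuation indexed by the random nuisance $\hat{f}^*_N$, controlled by stochastic equicontinuity plus a Neyman-orthogonality/envelope argument showing the Fr\'echet derivative of the dual objective in $f$ vanishes at the optimum). That extra step is exactly the subtlety that makes plug-in Sieve M-estimation work, and the paper's proof glosses over it by lumping everything that is not the fixed-$f^*$ sample average into ``$\mathrm{Bias}_N$.'' Your additional remark about the simplex constraint and Lagrange-multiplier elimination is also more explicit than the paper's terse appeal to interiority via entropic regularization, though both are acceptable. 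In short: same skeleton, but your version names the nuisance-orthogonality step that the paper silently assumes, which is where the real technical burden for a fully rigorous proof with a gradient-penalized ReLU critic would lie.
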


\begin{proof}
    See Appendix C.3.
\end{proof}

\begin{remark}[Curse of Dimensionality]
    Assumption \ref{ass:sieve_rate} requires the approximation bias to vanish faster than the parametric rate $N^{-1/2}$. In high-dimensional settings ($d \gg 1$), this condition implies that the underlying Kantorovich potentials must belong to a smoothness class (e.g., Barron space) efficiently approximable by neural networks. If this condition fails, $\hat{\lambda}_N$ remains a consistent estimator of the ``Neural SCM'' weights—parameters that minimize the transport cost relative to the discriminator class $\mathcal{F}_N$ rather than the exact $Lip_1(\mathcal{X})$ space.
\end{remark}

\subsection{Finite-Sample Inference via Permutation}
While Theorem \ref{thm:normality} establishes that the WGAN-based estimator achieves the parametric $\sqrt{N}$ convergence rate asymptotically, relying on the Gaussian approximation for hypothesis testing presents practical challenges in the context of Synthetic Controls.

First, the number of donor units $J$ is typically small, rendering the estimation of the asymptotic variance matrix $\Sigma = (H^*)^{-1} V (H^*)^{-1}$ unstable. Second, and more critically, the optimal weights $\lambda^*$ often lie on the boundary of the simplex $\Delta^{J-1}$ (i.e., sparsity). In such cases, the finite-sample distribution of the estimator is often non-standard and bounded, deviating significantly from the unrestricted Gaussian limit derived under interior solutions.

Therefore, for valid finite-sample inference, we adopt the conformal-style permutation framework \citep{abadie2010synthetic, firpo2009unconditional} , which remains exact regardless of the sample size or the boundary nature of the solution, provided the exchangeability assumption holds.

\subsubsection{Permutation Test Procedure}

We test the null hypothesis $H_0$ that the intervention has no effect on the distribution of the outcome. Under $H_0$, the treated unit and the donor units are exchangeable regarding the assignment of the intervention. We construct a reference distribution of the treatment effect statistic by iteratively assigning the treatment status to each donor unit $j \in \{2, \dots, J+1\}$.

Let $\hat{\tau}_1$ be the estimated treatment effect for the actual treated unit, defined as the Wasserstein distance between the observed post-treatment distribution and its synthetic counterpart:
\begin{equation}
    \hat{\tau}_1 = W_1(\hat{P}_{1}^{\text{post}}, \hat{P}_{1(\text{synth})}^{\text{post}}).
\end{equation}
Similarly, for each donor $j$, we estimate a placebo synthetic control using the remaining units and compute the placebo effect $\hat{\tau}_j$.

The p-value for the null hypothesis of no effect is calculated as the proportion of units  whose estimated effect is at least as large as that of the treated unit:
\begin{equation}
    p\text{-value} = \frac{1}{J+1} \sum_{j=1}^{J+1} \mathbb{I}(\hat{\tau}_j \ge \hat{\tau}_1),
\end{equation}
where $\mathbb{I}(\cdot)$ is the indicator function.

\subsubsection{Validity of the Permutation Test}

The validity of this p-value relies on the assumption of random assignment of the treatment among the units.

\begin{theorem}[Validity of Permutation Inference] \label{thm:permutation}
    Suppose that under the null hypothesis $H_0$, the assignment of the treatment is uniformly distributed over the set of units $\{1, \dots, J+1\}$ (Exchangeability). Then, the p-value defined in Equation (30) is valid, meaning that for any significance level $\alpha \in [0, 1]$,
    \begin{equation}
        \mathbb{P}_{H_0}(p\text{-value} \le \alpha) \le \alpha.
    \end{equation}
\end{theorem}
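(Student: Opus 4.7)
The plan is to use the classical permutation argument: under $H_0$ and the exchangeability assumption, the vector of test statistics $(\hat{\tau}_1,\dots,\hat{\tau}_{J+1})$ inherits a symmetric joint distribution, which forces the rank of the treated-unit statistic to be uniform on $\{1,\dots,J+1\}$ and hence delivers the size bound for the p-value.

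First, I would establish exchangeability at the level of the statistics. Under $H_0$ the intervention has no effect, so each unit's observed sequence of empirical measures $\{\hat P_{i,t}\}$ is driven only by its baseline stochastic process; the placebo construction computes $\hat\tau_j$ by a symmetric rule (apply Algorithm 1 treating $j$ as the ``treated'' unit and the remaining $J$ units as donors). Combined with the assumption that the treatment label is uniformly distributed over $\{1,\dots,J+1\}$, this implies that for any permutation $\pi$ of the indices, $(\hat\tau_{\pi(1)},\dots,\hat\tau_{\pi(J+1)}) \stackrel{d}{=} (\hat\tau_1,\dots,\hat\tau_{J+1})$.

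Second, I would introduce the rank statistics $R_i := \sum_{j=1}^{J+1} \mathbb{I}(\hat\tau_j \geq \hat\tau_i)$, so that by construction $(J+1)\cdot p\text{-value} = R_1$. Under a mild continuity condition on the sampling distribution of $\hat\tau_j$ (so that ties occur with probability zero), the collection $\{R_1,\dots,R_{J+1}\}$ is almost surely a permutation of $\{1,\dots,J+1\}$; exchangeability then forces each $R_i$ to be marginally uniform on $\{1,\dots,J+1\}$. Consequently, for any $\alpha \in [0,1]$,
\[
\mathbb{P}_{H_0}(p\text{-value} \le \alpha) \;=\; \mathbb{P}(R_1 \le \alpha(J+1)) \;=\; \frac{\lfloor \alpha(J+1) \rfloor}{J+1} \;\le\; \alpha,
\]
which is the desired conclusion.

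The main obstacle is the treatment of ties and the precise verification that the placebo construction is genuinely symmetric across units. Ties deserve care because the p-value uses weak inequality ``$\ge$''; with positive tie probability the test becomes conservative rather than exact, and the cleanest fix is either (i) to invoke an absolute continuity condition on the limiting law of the Wasserstein statistic, or (ii) to replace the raw rank by a randomized/mid-rank tie-break. In either case, one shows that $(J+1)\cdot p\text{-value}$ stochastically dominates the uniform law on $\{1,\dots,J+1\}$, preserving the size bound. The symmetry of the placebo rule, meanwhile, requires checking that Algorithm 1 applied with unit $j$ as target depends on the data only through the multiset of donor samples — a property that follows because both the Wasserstein critic and the entropic-mirror update are invariant to the ordering of donor indices.
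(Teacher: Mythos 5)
Your proposal follows essentially the same route as the paper's proof: invoke exchangeability of the units under $H_0$ so that the treated statistic's rank $R_1$ (with $(J+1)\cdot p\text{-value} = R_1$) is marginally uniform on $\{1,\dots,J+1\}$, then conclude $\mathbb{P}_{H_0}(p\text{-value}\le\alpha) = \lfloor\alpha(J+1)\rfloor/(J+1)\le\alpha$. Your added care about ties and about verifying that the placebo construction is genuinely symmetric in the unit label are welcome refinements that the paper elides; since the p-value uses the weak inequality ``$\ge$'', ties can only inflate $R_1$ and so preserve the size bound, making the test conservative but still valid.
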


\begin{proof}
    See Appendix C.4.
\end{proof}

\begin{remark}[Advantages over Asymptotic Approximation]
    Unlike asymptotic inference which relies on large $N$ approximations (Theorem \ref{thm:normality}), the permutation test is exact in finite samples under the exchangeability assumption. This is particularly advantageous in synthetic control applications where the number of donors $J$ is typically small, rendering large-sample approximations of the weight variance unreliable.
\end{remark}

\section{Simulation Results}

\subsection{Finite-Sample Robustness to Heavy-Tailed Contamination}
\label{subsec:outliers}

This subsection evaluates the finite-sample stability of the proposed estimator under heavy-tailed data contamination. In empirical economics, particularly within finance and income distribution studies, observed outcomes frequently exhibit fat tails or measurement errors that violate sub-Gaussian assumptions. We test the ability of the Wasserstein-based solver to recover the true structural weights $\bm{\lambda}^{true}$ in a regime where the treated unit's post-treatment distribution follows a Huber contamination model.

\subsubsection{Data Generating Process and Contamination Mechanism}

We simulate a panel comprising $J=4$ donor units and one treated unit. To align with the asymptotic regime defined in Section 6.2 ("Large Micro-Sample"), we set the number of micro-observations $N=300$ for each unit $j$ and period $t$. The pre-intervention window is set to $T_0=3$ to stress-test the estimator under limited temporal information.

The outcomes are generated via a latent factor structure consistent with Assumption \ref{latent}:
\begin{equation}
    Y_{j,t,i} = \alpha_t + \mu_j F_t + \sigma \epsilon_{j,t,i}, \quad \epsilon_{j,t,i} \sim \mathcal{N}(0,1)
\end{equation}
where $\alpha_t$ is a common time trend, $F_t \sim \mathcal{N}(0,1)$ represents a time-varying common factor, and $\mu_j \sim \mathcal{U}[0,1]$ is the unit-specific factor loading (corresponding to the latent factors $\mu_i$ in Section 3). The clean counterfactual outcome is constructed as a linear mixture of the donors using the ground-truth weight vector $\bm{\lambda}^{true} = (0.15, 0.25, 0.35, 0.25)^\top$, such that $\lambda_{1} = \sum_{j=2}^{J+1} \lambda_j^{true} \lambda_j$.

To induce contamination, we model the observed treated sample $\{Y_{t,i}^{Treated}\}$ as a mixture distribution:
\begin{equation}
    P_{obs} = (1 - \epsilon) P_{clean} + \epsilon P_{outlier}
\end{equation}
where $\epsilon \in \{0.01, \dots, 0.04\}$ represents the contamination rate. The contaminating distribution $P_{outlier}$ is drawn from a high-variance normal distribution $\mathcal{N}(\mu_{out}, 10^2)$, with the location parameter $\mu_{out}$ shifted significantly into the right tail of the clean distribution.

We assess the performance of two estimators across $N_{SIM}=100$ Monte Carlo replications. The first is the Benchmark (CDF-$L_2$), serving as a proxy for standard distributional synthetic controls (e.g., Gunsilius, 2023), which minimizes the integrated squared error between empirical CDFs. The second is our Proposed (WGAN-GP) estimator, which minimizes the Wasserstein-1 distance via the Kantorovich-Rubinstein dual, parameterized by a 1-Lipschitz neural network.

\subsubsection{Results: Geometric vs. Vertical Sensitivity}

Figure \ref{fig:dgp1} illustrates the geometry of the contamination for period $t=2$. The outliers are strictly localized in the right tail. Under the Wasserstein metric, this shift results in a linear increase in transport cost, whereas under Euclidean metrics, it induces a global distortion of the cumulative probability mass.

\begin{figure}[H]
    \centering
    \includegraphics[width=1\linewidth]{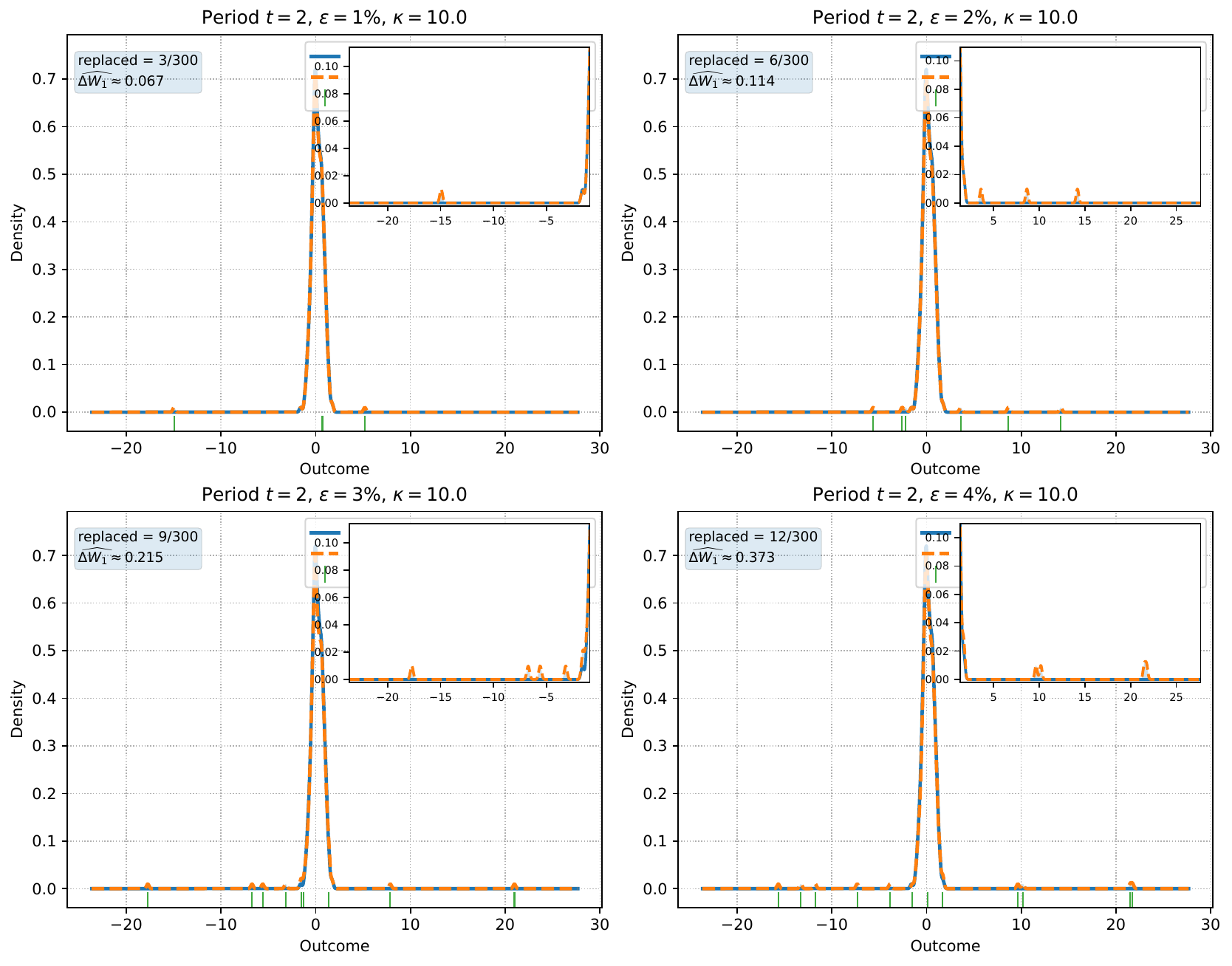}
    \caption{\textbf{Visual Comparison of Clean and Contaminated Target Distributions.} The panels depict the density of the treated outcome in period $t=2$ ($N=300$) under varying degrees of heavy-tailed contamination. The solid blue line represents the clean, ground-truth distribution, while the dashed orange line shows the observed contaminated distribution. As $\epsilon$ increases from $1\%$ to $4\%$, the Wasserstein-1 distance between the clean and contaminated samples ($\widehat{\Delta W}_1$) rises monotonically. The outliers are strictly localized in the tail, leaving the central mode structurally intact.}
    \label{fig:dgp1}
\end{figure}

We evaluate estimator performance using the Average Root Mean Squared Error (RMSE) of the estimated weights relative to $\bm{\lambda}^{true}$. Figure \ref{fig:robustness} presents the results.

\begin{figure}[H]
    \centering
     \includegraphics[width=1\textwidth]{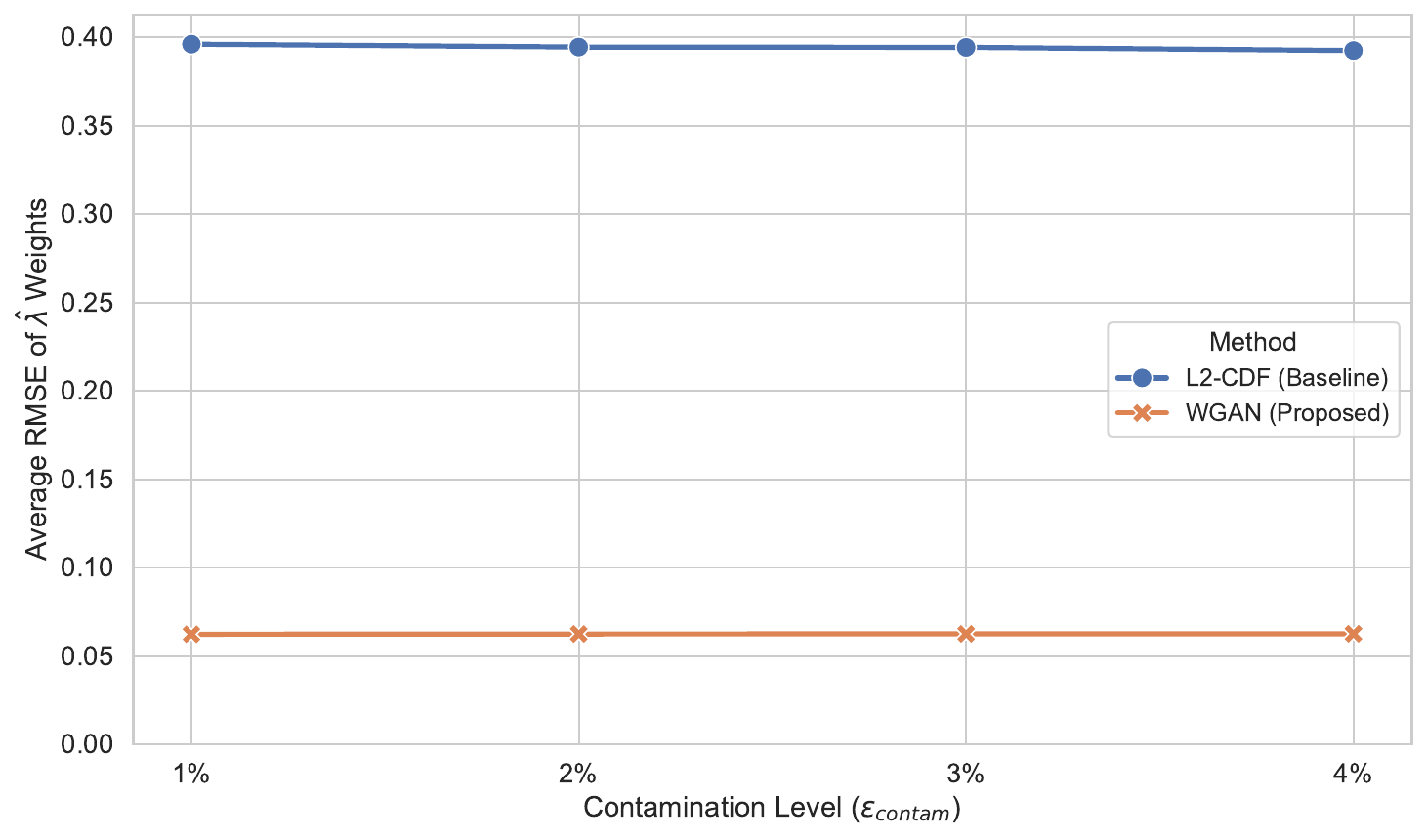}
    \caption{\textbf{Estimator Robustness to Heavy-Tailed Contamination.} The plot displays the Average RMSE of the estimated weights ($\hat{\bm{\lambda}}$) over $N_{SIM}=100$ simulations. The Baseline $L_2$ estimator (blue) exhibits high variance and bias immediately upon contamination ($\text{RMSE} \approx 0.39$), whereas the proposed WGAN estimator (orange) maintains structural fidelity ($\text{RMSE} \approx 0.06$) across all levels.}
    \label{fig:robustness}
\end{figure}

The results reveal a fundamental difference in the robustness of the two metrics. The Benchmark $CDF-L_2$ estimator exhibits an immediate breakdown, with the RMSE stabilizing at a high level ($\approx 0.39$) even under minimal contamination ($1\%$). This fragility is intrinsic to the geometry of the $L_2$ norm on function spaces. An outlier in the far tail shifts the empirical CDF ($F_N(y)$) by a constant amount $1/N$ over the entire domain between the bulk and the outlier. The squared error integral accumulates this deviation over the potentially infinite gap, rendering the loss function highly sensitive to the location of extreme values.

In contrast, the proposed WGAN estimator maintains a remarkably stable RMSE ($\approx 0.06$). By constraining the critic function $f$ to be 1-Lipschitz , the Wasserstein objective ensures that the gradient contribution of any data point is bounded. Consequently, the cost of fitting distant outliers grows only linearly with their distance, effectively "capping" their influence. The optimal transport plan thus prioritizes matching the structural core of the distribution  rather than distorting the weights to chase tail events.

\subsection{Stability under Support Violation}
\label{sec:sim_support_mismatch}

This subsection evaluates estimator performance under support mismatch, a scenario where the support of the target distribution is disjoint from the convex hull of the donor supports. This violation of the classical overlap assumption renders the synthetic control problem theoretically ill-posed. In such regimes, the objective is no longer to achieve perfect recovery , but to test whether the estimator degrades predictability or signals uncertainty through stable, maximum-entropy weights.

\subsubsection{Data Generating Process: Disjoint Supports}

We modify the latent factor structure to induce a controllable geometric gap. Let $Z \sim \mathcal{N}(0,1)$ be a latent generator. The treated unit's factors are generated via a bounded transformation $U_t^{Treated} = 0.5 \tanh(Z)$, confining the target support to the open interval $\mathcal{S}_{1} = (-0.5, 0.5)$.

The donor factors are generated by projecting the same latent $Z$ onto a disjoint domain controlled by a gap parameter $\gamma \in [0, 0.9]$:
\begin{equation}
    U_{j,t} = \text{sign}(Z) \cdot \max(|Z|, \gamma).
\end{equation}
The effective support of the donor pool is $\mathcal{S}_{donors} \approx \mathbb{R} \setminus (-\gamma, \gamma)$. As $\gamma$ increases, the distance between the closure of the treated support and the donor support, $d(\bar{\mathcal{S}}_1, \bar{\mathcal{S}}_{donors})$, strictly increases. At $\gamma=0.9$, the distributions are mutually singular with a significant gap.

To handle the potential non-uniqueness of weights in disjoint regimes, we employ the Entropy-Regularized WGAN (WGAN-E) estimator. The regularization term serves as a strict convexifier, selecting the maximum entropy solution among minimizing sequences:
\begin{equation}
    \hat{\bm{\lambda}}^{WGAN-E} = \arg\min_{\bm{\lambda} \in \Delta^{J-1}} \left[ W_1\left( \sum_{j=1}^J \lambda_j \hat{P}_{j,t}, \hat{P}_{t}^{Treated} \right) + \eta \sum_{j=1}^J \lambda_j \log \lambda_j \right].
\end{equation}
We set $\eta=0.01$ to prioritize transport fidelity while ensuring numerical stability when the transport gradient becomes flat. The benchmark remains the standard CDF-$L_2$ estimator.

\subsubsection{Results: Vanishing Gradients vs. Informative Transport}

We analyze the variance of the estimated weights as a function of the mismatch degree $\gamma$. Figure \ref{fig:support_mismatch} plots the average variance of $\hat{\bm{\lambda}}$ ($N_{SIM}=100$) on a logarithmic scale.

\begin{figure}[htbp]
    \centering
    \includegraphics[width=1\textwidth]{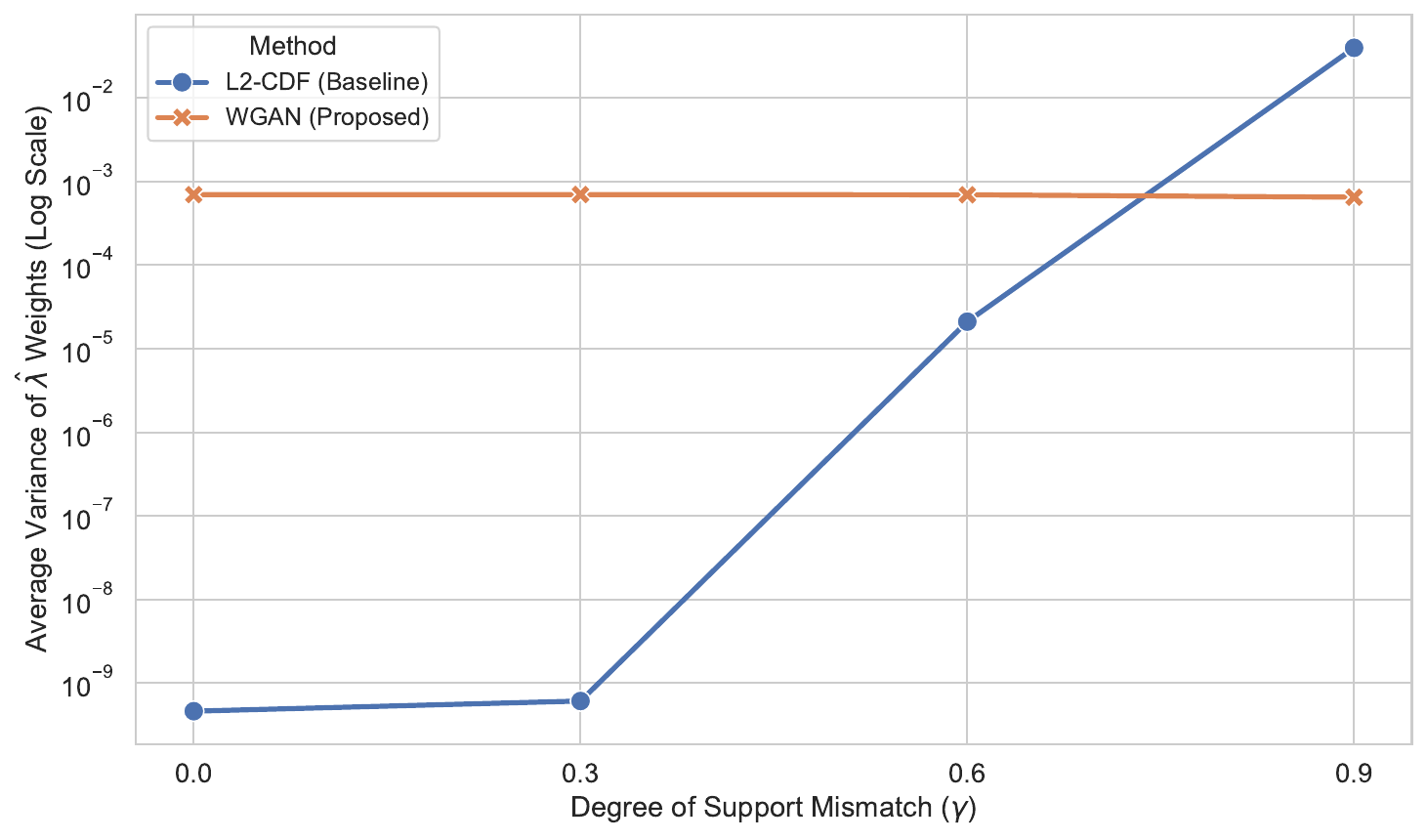}
    \caption{\textbf{Estimator Variance vs. Degree of Support Mismatch ($\gamma$).} The plot displays the Average Variance of the estimated weights. The $L_2$-CDF estimator (blue) exhibits asymptotic instability as the supports become disjoint ($\gamma \to 0.9$). The WGAN-E estimator (orange) maintains bounded variance across the entire domain.}
    \label{fig:support_mismatch}
\end{figure}

The results demonstrate a topological divergence. The Benchmark $L_2$ estimator exhibits instability as the support gap widens, with variance increasing by orders of magnitude at $\gamma=0.9$. This failure is geometric: when $\text{supp}(P_{synth}) \cap \text{supp}(P_{target}) = \emptyset$, the $L_2$ distance between densities (or CDFs) becomes locally constant with respect to infinitesimal weight changes. Consequently, $\nabla_{\bm{\lambda}} \mathcal{L}_{L_2} \approx \mathbf{0}$, causing the numerical optimizer to oscillate between corner solutions driven by machine precision noise.

In contrast, the WGAN-E estimator remains stable. To elucidate the mechanism, Figure \ref{fig:mean_behavior} plots the mean estimated weights $E[\hat{\lambda}_j]$.

\begin{figure}[htbp]
    \centering
    \includegraphics[width=1\textwidth]{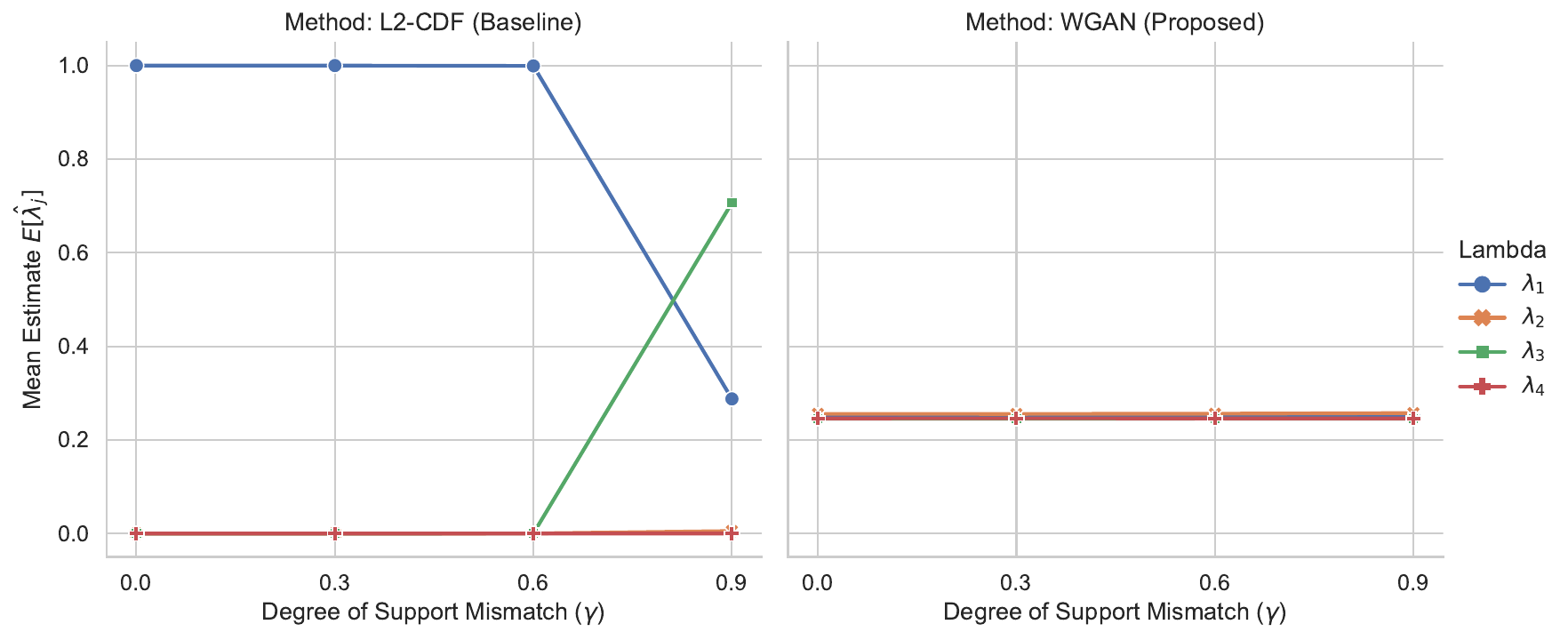}
    \caption{\textbf{Mean Weight Assignment under Support Mismatch.} The panels depict the average weight assigned to each donor ($J=4$). The $L_2$ solver (left) yields erratic estimates. The WGAN-E solver (right) converges to the uniform prior ($\lambda_j \approx 0.25$) as the transport cost becomes uniform across donors.}
    \label{fig:mean_behavior}
\end{figure}

Figure \ref{fig:mean_behavior} confirms that the stability is driven by the regularized transport geometry. Under disjoint supports, the Wasserstein distance strictly penalizes the geometric distance between probability masses. As $\gamma$ increases, the transport cost from any donor to the target becomes approximately equal (dominated by the gap $\gamma$). In this regime, the data term $W_1(\cdot)$ provides no discriminatory power, and the objective function is dominated by the entropy term. Consequently, the estimator converges to the barycenter of the simplex (uniform weights), properly signaling that the data is uninformative. The $L_2$ estimator, lacking both informative gradients and regularization, produces spurious sparsity ($\lambda_j \in \{0,1\}$), creating a false sense of identification.

\subsection{Structural Misspecification: Mixture Recovery}
\label{sec:sim_bimodal}

The third simulation investigates a fundamental theoretical limitation of standard distributional synthetic controls: model misspecification arising from multimodal outcomes. We construct a scenario where the target distribution is bimodal, representing a mixture of sub-populations, while the donor pool consists exclusively of unimodal distributions. This design exposes the structural inability of quantile-averaging methods to recover complex densities that do not satisfy location-scale preservation constraints.

\subsubsection{Data Generating Process and Quantile Aggregation Bias}

The target distribution $P_t^{Treated}$ is generated as an equiprobable mixture of two distinct Poisson measures :
\begin{equation}
    P_t^{Treated} = 0.5 \cdot \mathcal{P}(5) + 0.5 \cdot \mathcal{P}(20).
\end{equation}
This specification yields a discrete Probability Mass Function with distinct modes centered at 5 and 20. The donor pool consists of $J=4$ unimodal Poisson distributions $P_{j,t} = \mathcal{P}(\lambda_j)$, where the intensity parameters are given by $\bm{\lambda}_{donors} = \{2, 12, 18, 25\}$.

We benchmark the proposed WGAN-E solver against the standard $W_2$-Quantile estimator. The benchmark constructs the synthetic control by minimizing the $L_2$ distance between the quantile functions of the treated unit and a convex combination of donor quantiles :
\begin{equation}
    \hat{\bm{\lambda}}^{W2-Q} = \arg\min_{\bm{\lambda} \in \Delta^{J-1}} \left\| \sum_{j=1}^J \lambda_j Q_{P_j} - Q_{P_{Treated}} \right\|_2^2.
\end{equation}
This formulation relies on the implicit assumption that the quantile function of the mixture equals the mixture of the quantile functions. However, the mapping from a probability measure to its quantile function is non-linear with respect to convex combinations. Formally, for a weight vector $\bm{\lambda}$, the following inequality generally holds :
\begin{equation}
    Q\left(\sum_{j=1}^J \lambda_j P_j\right) \neq \sum_{j=1}^J \lambda_j Q(P_j).
\end{equation}
Consequently, the quantile averaging approach is theoretically misspecified for multimodal targets constructed from unimodal donors. In contrast, the WGAN-E estimator optimizes weights directly over the space of probability measures, treating the synthetic control as a statistical mixture $\sum \lambda_j P_j$ rather than a mixture of inverses.

\subsubsection{Results: Recovery of Bimodal Topology}

Figure \ref{fig:pmf_bimodal} compares the capacity of both estimators to reconstruct the target Probability Mass Function.

\begin{figure}[H]
    \centering
    \includegraphics[width=1\textwidth]{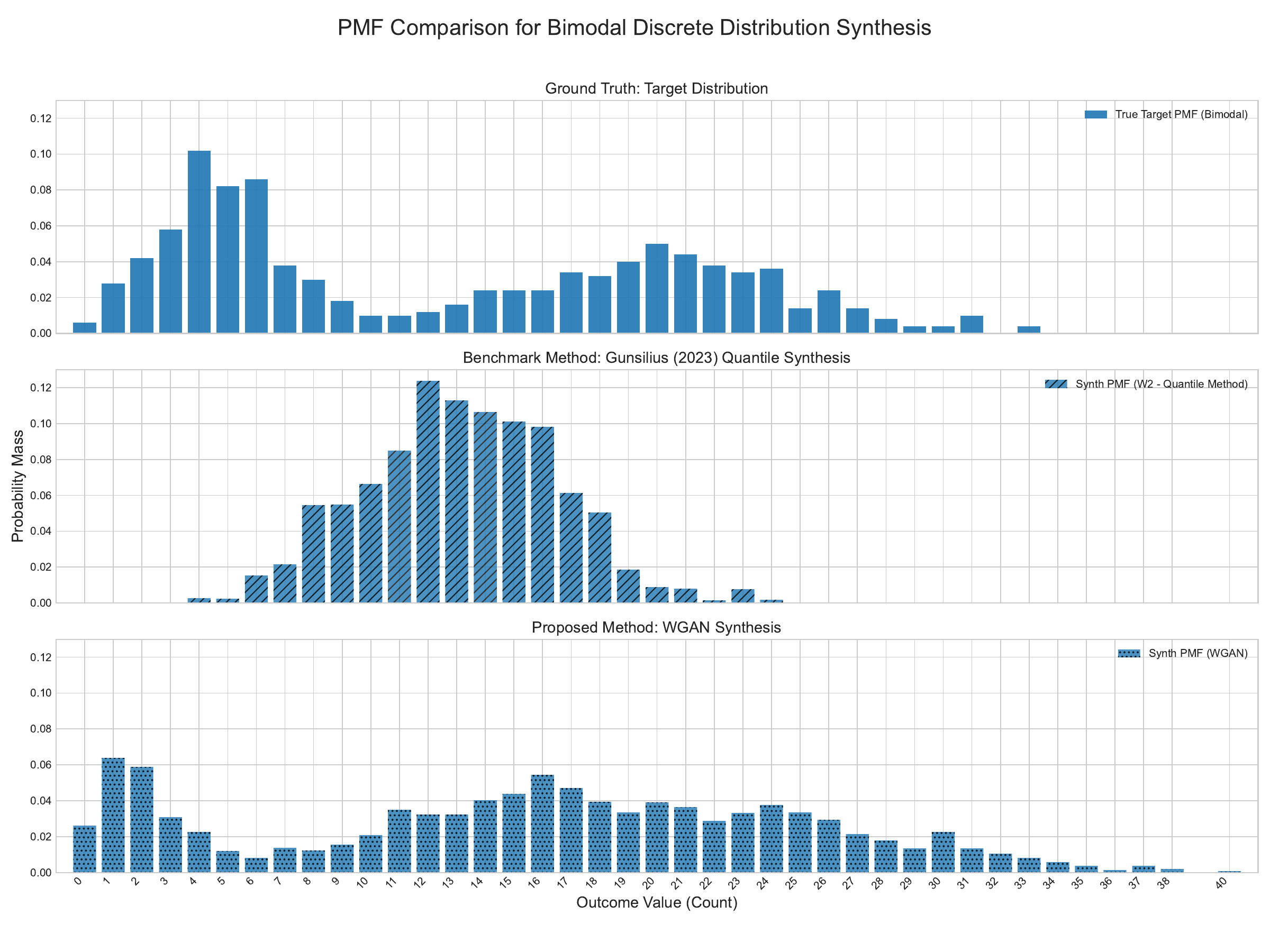}
    \caption{\textbf{Recovery of Bimodal Probability Mass Function.} The top panel displays the ground truth target distribution. The middle panel shows the synthetic distribution from the $W_2$-Quantile benchmark, which fails to capture the bimodality and produces a unimodal artifact. The bottom panel shows the WGAN-E synthesis, which successfully recovers the two distinct modes via a correct linear mixture of donors.}
    \label{fig:pmf_bimodal}
\end{figure}

The results demonstrate a structural divergence between the two methods. The $W_2$-Quantile solver displayed in the middle panel fails to replicate the bimodal structure. By averaging the quantiles of unimodal donors, the estimator produces a single, smoothed unimodal distribution centered between the two true modes. This outcome is a statistical artifact that represents neither sub-population, confirming that quantile-based synthesis requires the restrictive assumption that the counterfactual belongs to the same location-scale family as the donors .

Conversely, the proposed WGAN-E solver shown in the bottom panel successfully recovers the bimodal topology of the target. By optimizing the Wasserstein distance over the convex hull of donor measures, the estimator correctly identifies a weight vector that mixes donors with low intensity parameters around 5 and high intensity parameters around 20 to reconstruct the two distinct clusters. This result establishes that the measure-based approach remains consistent in scenarios involving complex or polarized distributions where quantile aggregation fails.

\subsection{Extension to Multivariate Domains}
\label{sec:sim_multivariate}

Finally, we demonstrate a decisive theoretical and computational advantage of the WGAN-based framework: its natural extension to multivariate outcome distributions. Standard distributional synthetic control methods relying on the inverse probability transform encounter fundamental definition challenges in dimensions $d \ge 2$. This limitation arises because Euclidean space lacks a canonical total ordering, rendering the definition of a multivariate quantile function theoretically ambiguous and computationally intractable . Consequently, existing quantile-based estimators cannot be straightforwardly applied to joint distributions of multiple outcomes, such as inflation and unemployment, or inequality and growth.

In contrast, the Wasserstein distance requires only a metric space and scales to high-dimensional distributions without structural modification. We validate this property through a bivariate proof-of-concept where the outcome lies in $\mathbb{R}^2$ .

\subsubsection{Multivariate Data Generating Process}

We design a scenario where the donor pool consists of $J=4$ bivariate Gaussian distributions $P_j = \mathcal{N}(\bm{\mu}_j, \bm{\Sigma})$. The mean vectors $\bm{\mu}_j$ are positioned symmetrically at the four corners of a square grid, specifically $\{-3, 3\} \times \{-3, 3\}$. All units share a covariance matrix $\bm{\Sigma}$ with a correlation coefficient $\rho=0.5$. The target distribution $P_t^{Treated}$ is constructed as a known mixture of these four donor distributions using the ground-truth weight vector $\bm{\lambda}^{true} = (0.15, 0.25, 0.35, 0.25)^\top$ .

To solve this problem, we adapt the WGAN-E architecture by modifying the critic network to accept two-dimensional inputs $D: \mathbb{R}^2 \to \mathbb{R}$. The generator minimizes the $W_1$ distance as defined in Equation \ref{w1}. We note that the $W_2$-Quantile benchmark is excluded from this simulation as the estimator is undefined for multivariate data .

\subsubsection{Visual Reconstruction of Joint Distributions}

\begin{figure}[htbp]
    \centering
    \includegraphics[width=1\linewidth]{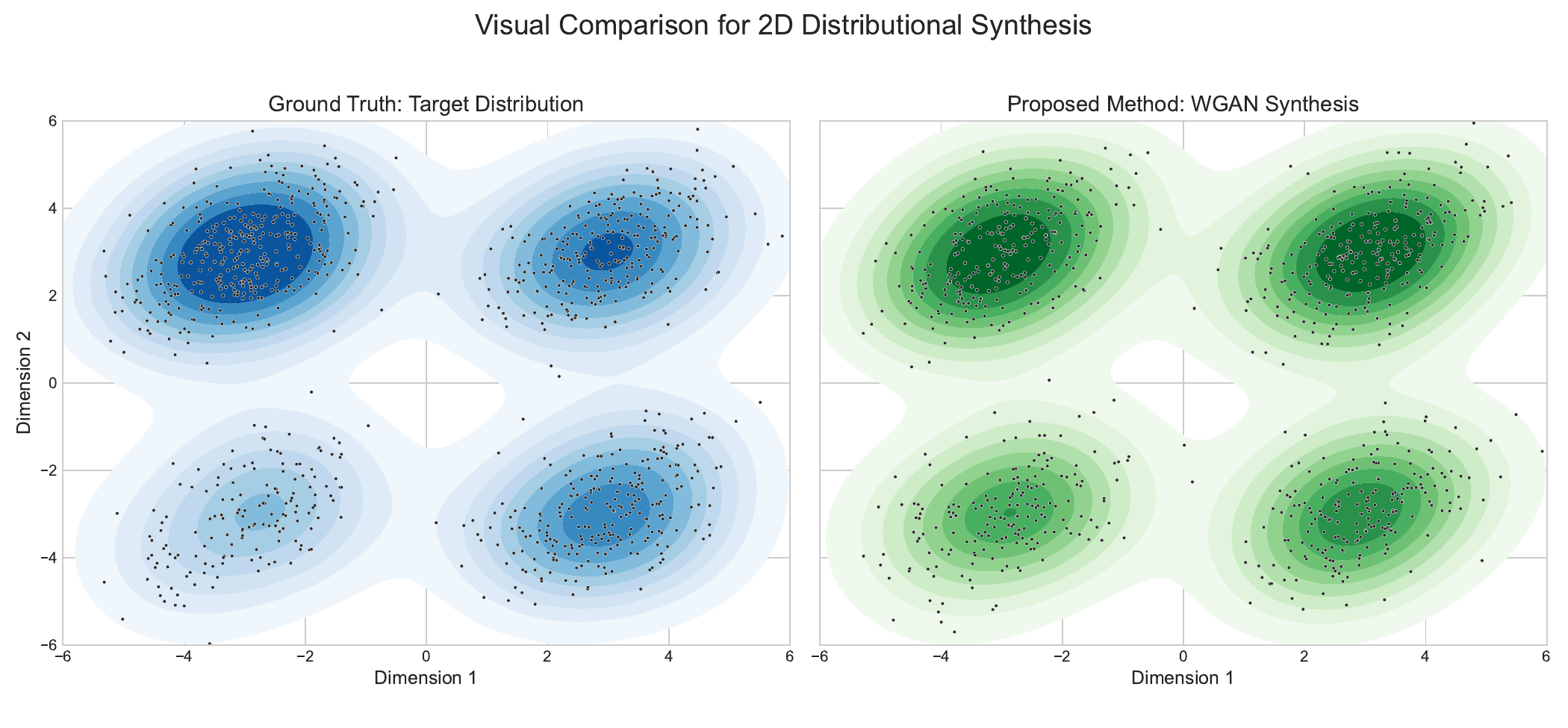}
    \caption{\textbf{Visual Comparison for 2D Distributional Synthesis.} The left panel displays the kernel density estimate of the ground truth target distribution, which exhibits four distinct modes corresponding to the Gaussian components. The right panel displays the synthetic distribution generated by the WGAN-E solver. The estimator successfully identifies the multimodal structure and accurately replicates both the location and the relative probability mass of each cluster.}
    \label{fig:2d_synthesis}
\end{figure}

Figure \ref{fig:2d_synthesis} provides a qualitative validation of the synthesis. The side-by-side comparison reveals a compelling match. The Ground Truth panel displays the multimodal structure of the target characterized by four distinct clusters with varying probability masses. The WGAN-E synthesis accurately replicates this topology. The solver correctly identifies the location of all four modes and assigns the correct relative density to each cluster, reflecting the underlying weights $\bm{\lambda}^{true}$.

This result serves as a proof-of-concept that the adversarial framework effectively circumvents the dimensionality curse of quantile-based methods. By leveraging neural networks to approximate the Kantorovich potential, our method decouples the computational complexity from the geometric complexity of the outcome space. This enables robust distributional synthesis in multivariate settings where traditional sorting-based methods are inapplicable .

\section{Empirical Application: A Stress Test on Structural Breaks}
\label{sec:application}

To demonstrate the finite-sample robustness of the proposed WGAN-DSC estimator, we apply it to a canonical stress test scenario: the Kansas tax experiment of 2012. While originating from fiscal policy, this episode serves as an ideal testing ground for financial econometrics because it exhibits the stylized facts often encountered in asset pricing and risk management: a massive, exogenous structural shock, heavy-tailed distributional dynamics, and a potential regime switch where the post-treatment support deviates from the historical convex hull.

The 2012 reform represents a distinct regime switch intended to alter the incentive structure for capital accumulation. From an econometric perspective, this intervention tests whether the estimator can distinguish between a shift in the first moment and a fundamental deformation of the distribution's geometry. While theoretical proponents anticipated a first-order stochastic dominance shift, the empirical question is whether the shock primarily induced idiosyncratic volatility and tail risk without a corresponding mean shift. Standard difference-in-differences estimators, which focus on conditional means, fail to capture the geometry of such shifts for the same reason they limit the analysis of asymmetric financial returns.

We construct empirical measures of log real household income using micro-level data from the IPUMS CPS March Supplement (2005--2016). Crucially, this application tests the estimator's ability to handle support mismatch. Similar to liquidity gaps in high-frequency trading, the post-treatment distribution may drift into regions sparsely covered by the pre-treatment donor pool. As established in Section \ref{economicframe}, while Euclidean metrics would face vanishing gradients in such disjoint regimes, the Wasserstein metric leverages the underlying metric space to provide a meaningful signal of distance regardless of overlap.

\begin{figure}[htbp]
    \centering
    \includegraphics[width=1\linewidth]{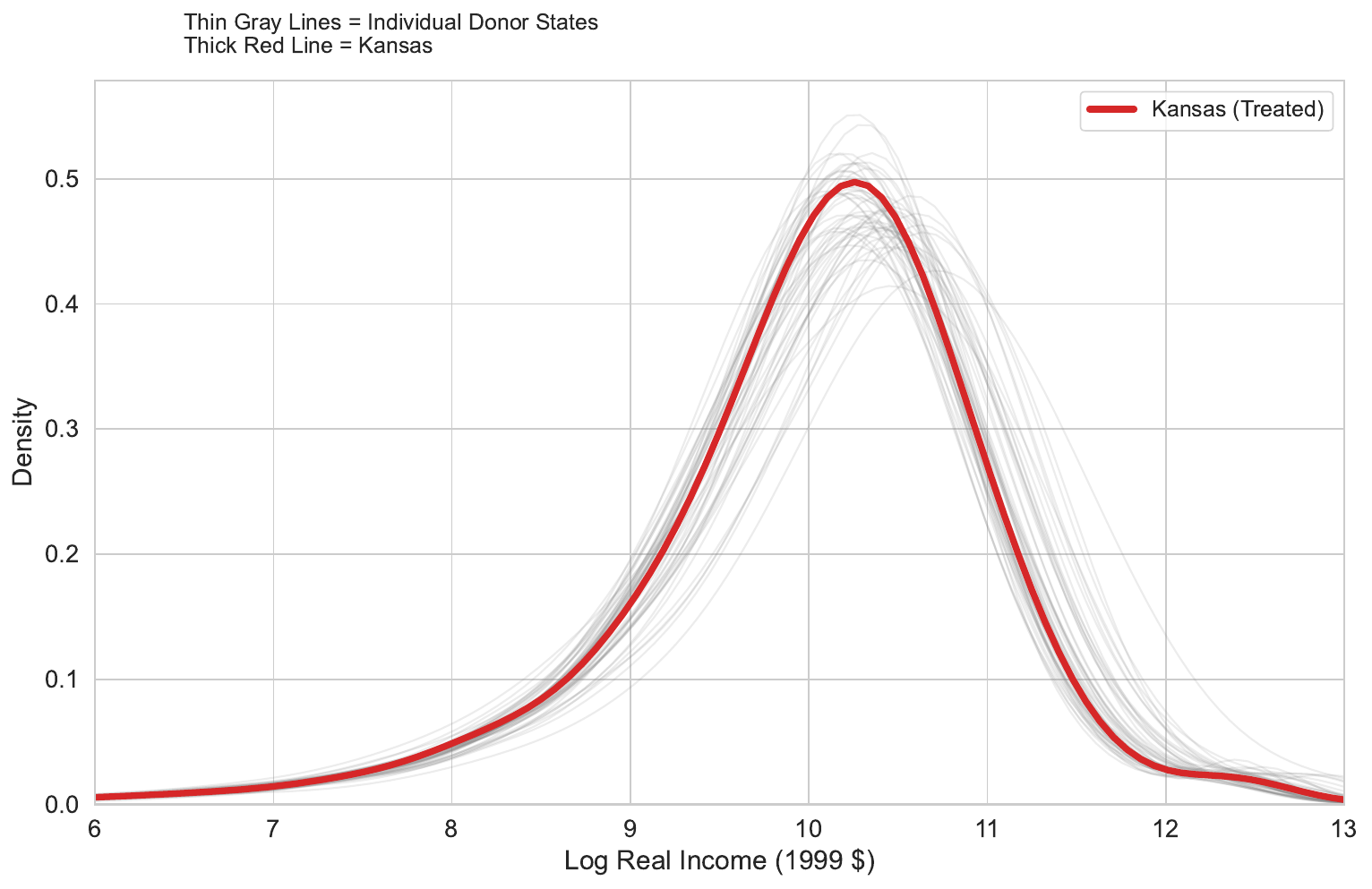}
    \caption{\textbf{Common Support Verification.} The plot displays the kernel density estimates of log real income for the treated unit (Kansas, thick red line) and all individual donor states (thin gray lines) averaged over the pre-treatment period. The visualization confirms that the support of the target distribution lies well within the convex hull of the donor pool, satisfying the geometric identification condition.}
    \label{fig:kansas_pre}
\end{figure}

Figure \ref{fig:kansas_pre} verifies the geometric identification condition. The income distribution of Kansas is enveloped by the donor pool, indicating that a convex combination of donors can theoretically reconstruct the target distribution. Unlike standard methods that match only mean moments, our approach requires the donor pool to span the support of the treated unit's probability measure to ensure valid transport.

We apply the WGAN-E estimator to construct a synthetic counterfactual by minimizing the regularized Wasserstein-1 distance between the empirical measure of Kansas and a weighted barycenter of donor states for the pre-treatment years 2005--2011.

\begin{figure}[H]
    \centering
    \includegraphics[width=1\linewidth]{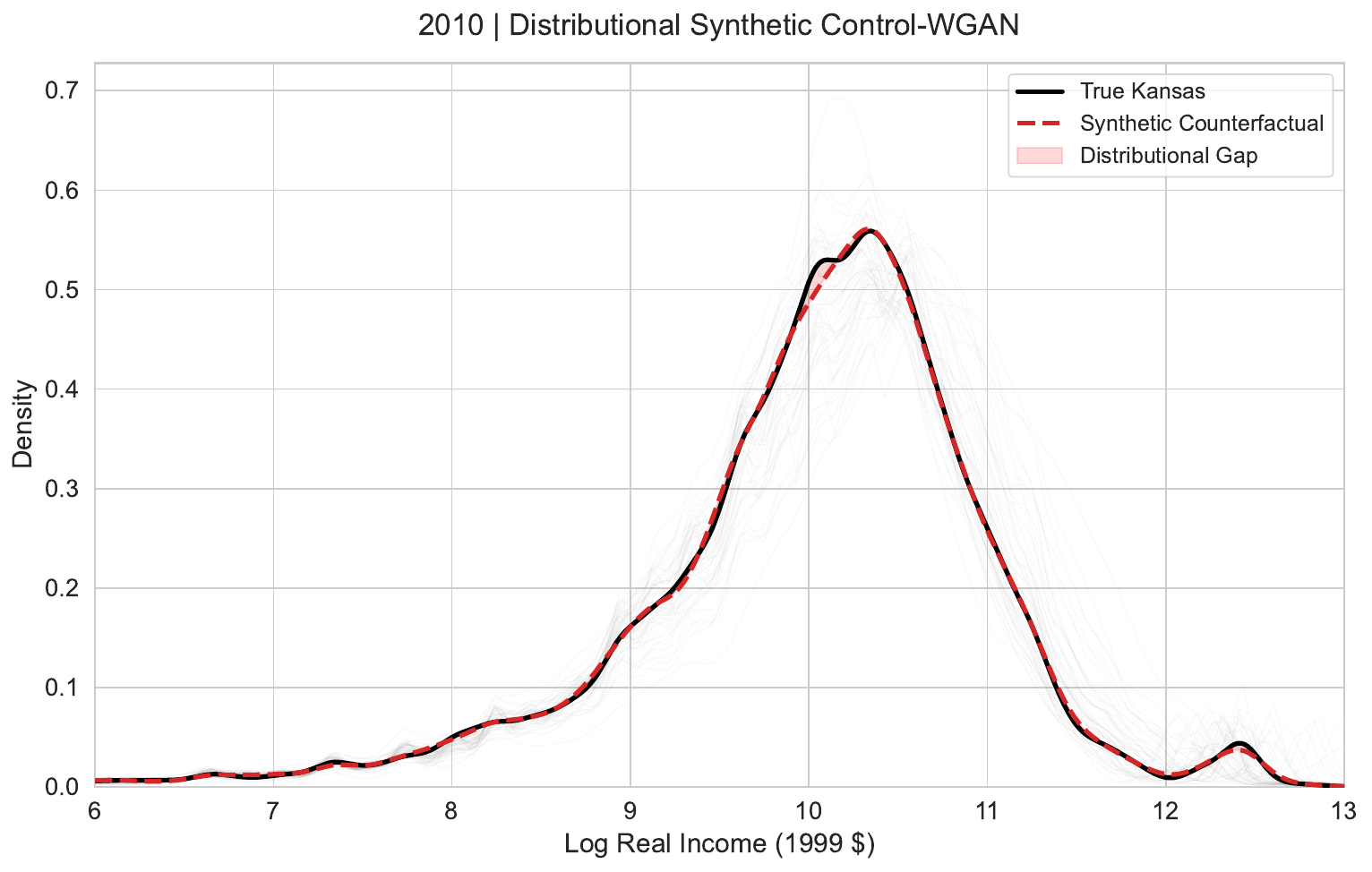}
    \caption{\textbf{Pre-treatment Goodness-of-Fit (2010).} The figure compares the observed distribution of Kansas (solid black line) against the WGAN-generated synthetic counterfactual (dashed red line) for a placebo year, 2010. The tight overlap demonstrates the estimator's ability to recover the latent factor structure driving the income distribution prior to the intervention.}
    \label{fig:kansas_fit}
\end{figure}

Figure \ref{fig:kansas_fit} illustrates the goodness-of-fit for 2010, ensuring the validity of the learned weights. The WGAN-based synthetic control reproduces the observed distribution with remarkable precision, capturing not just the location but also the higher-order moments (skewness and kurtosis). This validates that the estimated donor weights $\bm{\lambda}^*$ successfully recover the latent factor structure driving the regional economy prior to the shock.

We estimate the counterfactual distributions for the post-treatment years 2013--2015 to quantify the distributional impact. The results summarized in Figure \ref{fig:kansas_results} reveal a distinct structural decoupling that mean-variance models would miss.

\begin{figure}[htbp]
    \centering
    \begin{subfigure}[b]{0.48\textwidth}
        \centering
        \includegraphics[width=\textwidth]{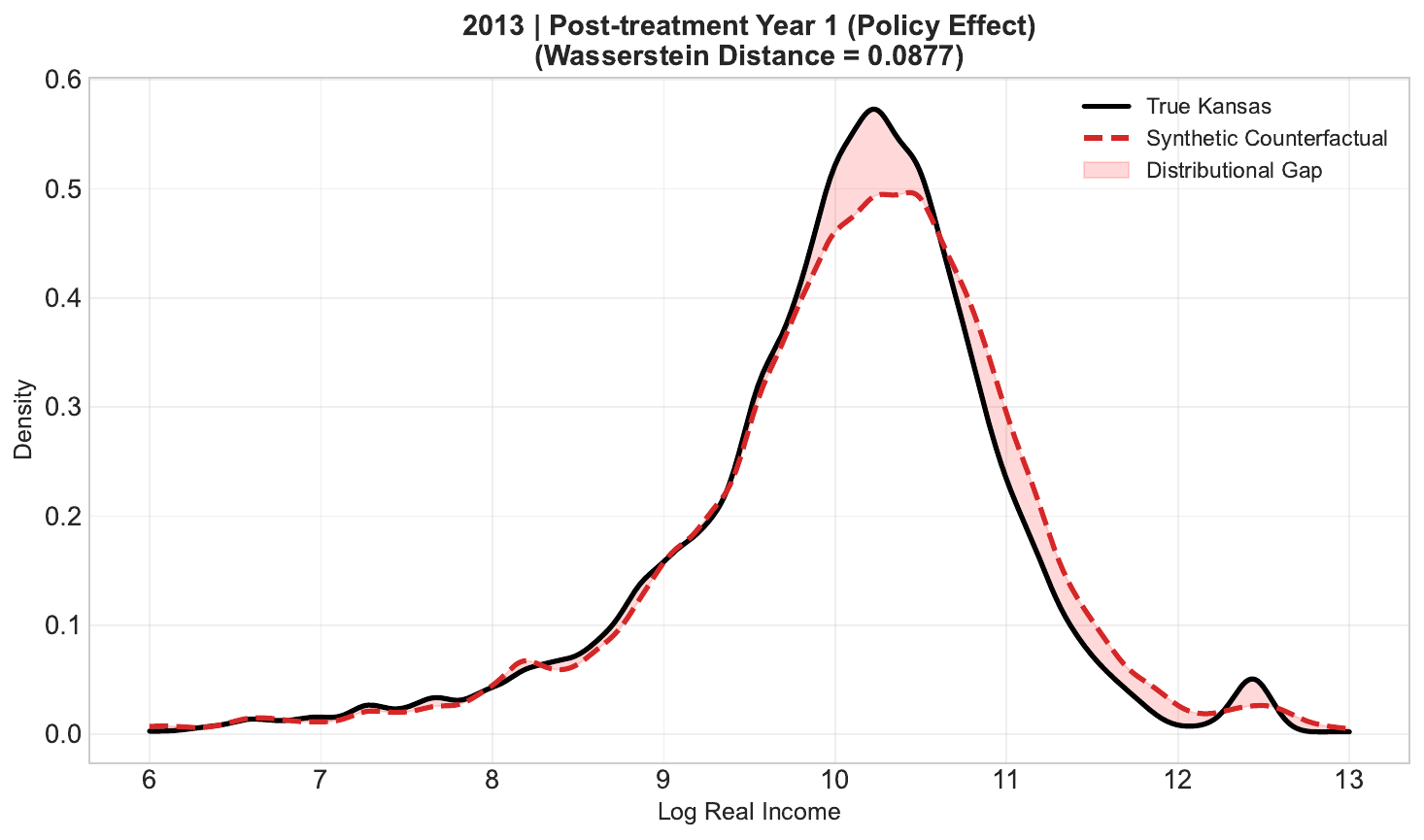}
        \caption{2013 ($W_1 = 0.0877$)}
    \end{subfigure}
    \hfill
    \begin{subfigure}[b]{0.48\textwidth}
        \centering
        \includegraphics[width=\textwidth]{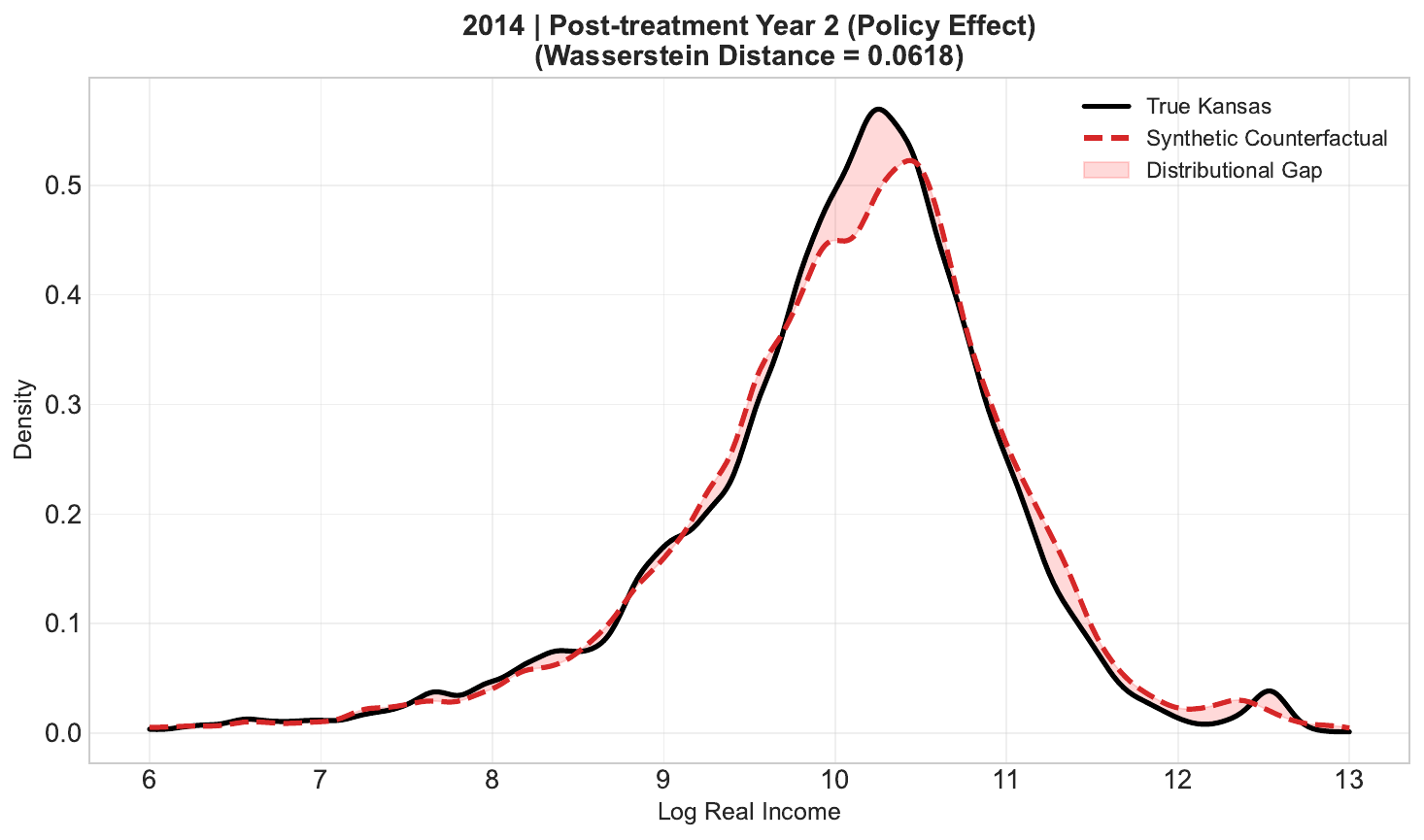}
        \caption{2014 ($W_1 = 0.0618$)}
    \end{subfigure}
    \vspace{1em}
    \begin{subfigure}[b]{0.48\textwidth}
        \centering
        \includegraphics[width=\textwidth]{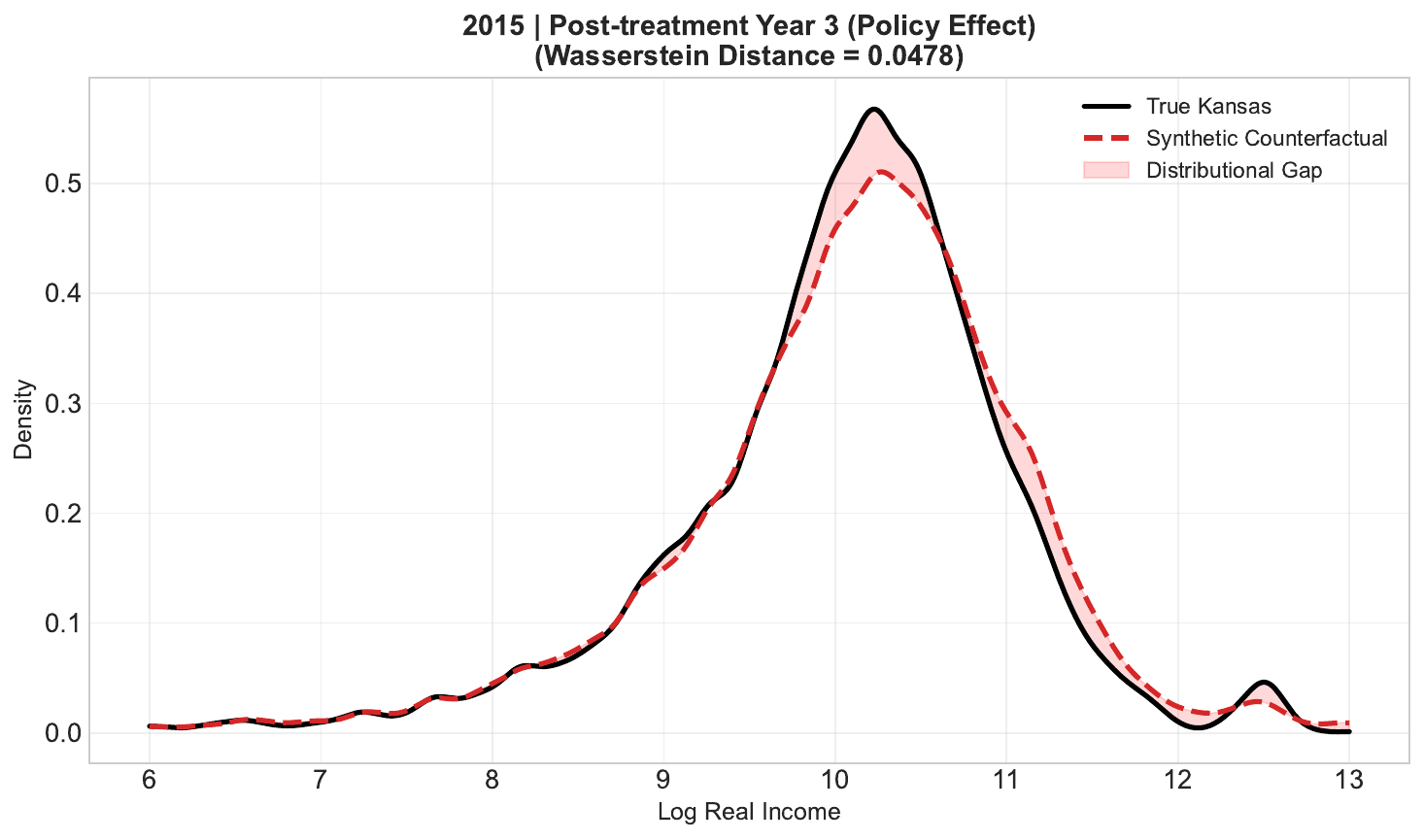}
        \caption{2015 ($W_1 = 0.0478$)}
    \end{subfigure}
    \caption{\textbf{Post-Treatment Distributional Effects.} The panels display the observed outcome (solid black) versus the synthetic counterfactual (dashed red) for the years following the 2012 regime switch. The shaded regions represent the transport mass required to align the distributions, visually quantifying the distributional gap. The substantial Wasserstein distances indicate a structural deviation from the counterfactual trend.}
    \label{fig:kansas_results}
\end{figure}

In 2013, the first full year of the new regime, we observe an immediate distributional shock with the Wasserstein distance spiking to $0.088$. Visually, the synthetic counterfactual exhibits a probability mass concentration distinct from the realized outcome. By 2014 and 2015, while the magnitude of the Wasserstein distance slightly attenuates, the structural gap persists.

The shaded distributional gap highlights that the policy did not simply shift the distribution uniformly; rather, the actual distribution lags behind the counterfactual in specific quantiles. This pattern suggests a heterogeneous response analogous to an idiosyncratic volatility shock in asset markets. The successful recovery of the counterfactual distribution in this setting suggests the WGAN-DSC is well-suited for financial applications requiring tail risk attribution and illiquid asset valuation, where capturing the morphological shift of the stochastic process is paramount.

\section{Conclusion}
\label{sec:conclusion}

This paper proposes a robust framework for Distributional Synthetic Controls (DSC) grounded in the geometry of Optimal Transport (OT). While the extension of the synthetic control method from mean outcomes to entire probability distributions represents a significant leap in causal inference, existing approaches relying on $L_2$-quantile minimization remain vulnerable to the geometric limitations of Euclidean metrics. We demonstrate that such methods implicitly assume overlapping supports and unimodal structures, rendering them fragile in the presence of outliers, support mismatch, or complex multimodal outcomes.

Our methodological contribution is to reformulate the synthetic control problem as the minimization of the Wasserstein-1 distance between probability measures. By leveraging the Kantorovich-Rubinstein duality within a Generative Adversarial Network (WGAN) architecture, we provide a feasible computational strategy that scales to continuous and high-dimensional outcomes.

The theoretical and simulation results presented herein establish three decisive advantages of this approach. First, the Wasserstein metric provides valid gradient signals even when the supports of the treated and donor distributions are disjoint, resolving the support mismatch failure that plagues standard density-matching estimators. Second, by penalizing transport cost rather than vertical discrepancies, our estimator exhibits superior robustness to heavy-tailed contamination. Third, identifying the synthetic control as a linear mixture of measures—rather than a mixture of quantile functions—allows for the accurate recovery of complex, multimodal counterfactuals where traditional quantile averaging fails.

Practically, this framework expands the scope of the synthetic control method to a broader class of empirical questions. Researchers can now rigorously evaluate the impact of interventions on inequality, polarization, and multidimensional welfare metrics without being constrained by the rigid distributional assumptions of prior methods. As causal inference increasingly moves towards characterizing heterogeneity beyond the mean, the integration of OT principles offers a principled and robust path forward.

\newpage

\addappheadtotoc
\bibliography{main}
\newpage

\newpage
\section*{Appendix A: Mathematical Proofs}

\subsection*{A.1 Proof of Theorem 1 (Counterfactual Recovery)}

\begin{proof}
We aim to show that for any post-intervention period $t > T_0$, the distance between the synthetic control distribution and the true counterfactual distribution is zero.

Let $P_{1t}^N$ denote the counterfactual distribution of the treated unit, and let $P_{\lambda^*, t} = \sum_{j=2}^{J+1} \lambda_j^* P_{jt}$ denote the synthetic control distribution constructed using the optimal weights $\lambda^*$.

Using the Latent Factor Generation process defined in Assumption 1, we can express the observed distributions as the push-forward of the latent factors via the transport map $\mathcal{T}_t$. Specifically, $P_{jt} = \mathcal{T}_t(\mu_j)$ for all units $j$.

The Wasserstein-1 distance between the synthetic and counterfactual distributions is:
\begin{equation}
W_1\left( P_{\lambda^*, t}, P_{1t}^N \right) = W_1\left( \sum_{j=2}^{J+1} \lambda_j^* \mathcal{T}_t(\mu_j), \mathcal{T}_t(\mu_1) \right)
\end{equation}

\paragraph{Step 1: Affine Linearity of Mixtures.}
The mixture operation commutes with the transport map under the structural model. Since the synthetic control is defined as a linear mixture of measures, and the latent structure implies that the observed mixture arises from the mixture of latent factors transformed by the aggregate shock, we have:
\begin{equation}
\sum_{j=2}^{J+1} \lambda_j^* \mathcal{T}_t(\mu_j) = \mathcal{T}_t\left( \sum_{j=2}^{J+1} \lambda_j^* \mu_j \right)
\end{equation}
Substituting this into the distance equation yields:
\begin{equation}
W_1\left( P_{\lambda^*, t}, P_{1t}^N \right) = W_1\left( \mathcal{T}_t\left( \sum_{j=2}^{J+1} \lambda_j^* \mu_j \right), \mathcal{T}_t(\mu_1) \right)
\end{equation}

\paragraph{Step 2: Structural Stability (Scaled Isometry).}
We invoke Assumption 2, which states that the transport map $\mathcal{T}_t$ satisfies the scaled isometry condition $W_1(\mathcal{T}_t(\nu_1), \mathcal{T}_t(\nu_2)) = \kappa_t W_1(\nu_1, \nu_2)$. Applying this property:
\begin{equation}
W_1\left( \mathcal{T}_t\left( \sum_{j=2}^{J+1} \lambda_j^* \mu_j \right), \mathcal{T}_t(\mu_1) \right) = \kappa_t W_1\left( \sum_{j=2}^{J+1} \lambda_j^* \mu_j, \mu_1 \right)
\end{equation}

\paragraph{Step 3: Pre-treatment Matching.}
By the condition of Theorem 1, the optimal weights $\lambda^*$ ensure that the treated unit's latent factors lie within the convex hull of the donors, specifically $\mu_1 = \sum_{j=2}^{J+1} \lambda_j^* \mu_j$. Consequently, the distance in the latent space is strictly zero:
\begin{equation}
W_1\left( \sum_{j=2}^{J+1} \lambda_j^* \mu_j, \mu_1 \right) = 0
\end{equation}

Combining Steps 2 and 3, we obtain:
\begin{equation}
W_1\left( P_{\lambda^*, t}, P_{1t}^N \right) = \kappa_t \cdot 0 = 0
\end{equation}
Thus, the synthetic control perfectly recovers the counterfactual distribution $P_{1t}^N$ for all $t > T_0$.
\end{proof}

\section*{Appendix B: Gradient Derivations}

\subsection*{B.1 Proof of Proposition \ref{prop:gradients}}

\begin{proof}
\textbf{Part 1 ($L_2$ Case):}
The squared $L_2$ distance between densities is given by the functional $\mathcal{L}_{L2}(\lambda) = \int (p_1(x) - \sum_{j} \lambda_j p_j(x))^2 dx$. Expanding the square term yields:
\begin{equation}
\mathcal{L}_{L2}(\lambda) = \int p_1(x)^2 dx - 2 \sum_{j} \lambda_j \int p_1(x)p_j(x) dx + \int \left(\sum_{j} \lambda_j p_j(x)\right)^2 dx
\end{equation}
Under the assumption of disjoint supports ($\text{supp}(P_{1}) \cap \text{supp}(P_{j}) = \emptyset$ for all $j$), the cross-terms $\int p_1(x)p_j(x) dx$ are strictly zero. Furthermore, the remaining terms only involve integrals over the donor supports or the target support independently.
Crucially, variations in $\lambda$ only reallocate mass among donors within the donor support region. Since this region is disjoint from the target, local perturbations in weights do not reduce the distance to the target. Mathematically, the derivative $\frac{\partial \mathcal{L}_{L2}}{\partial \lambda_j}$ depends only on the interactions between donors, providing no signal regarding the direction or distance towards the target $P_1$. Thus, the gradient is uninformative for reducing the gap between the disjoint supports.

\textbf{Part 2 ($W_1$ Case):}
The Wasserstein-1 distance is defined by the Kantorovich-Rubinstein dual:
\begin{equation}
W_1(P_{\lambda}, P_1) = \sup_{f \in Lip_1(\mathcal{X})} \Phi(\lambda, f)
\end{equation}
where $\Phi(\lambda, f) = \mathbb{E}_{x \sim P_1}[f(x)] - \sum_{j} \lambda_j \mathbb{E}_{x \sim P_j}[f(x)]$.
The function $\Phi(\lambda, f)$ is linear in $\lambda$ and concave in $f$. Since the set of 1-Lipschitz functions $Lip_1(\mathcal{X})$ is compact (under appropriate topology) and $\Phi$ is continuous, we can apply Danskin's Theorem (or the Envelope Theorem). The sub-gradient of the supremum function value with respect to $\lambda_j$ is equal to the partial derivative of $\Phi$ evaluated at the optimal dual potential $f^*$:
\begin{equation}
\frac{\partial W_1}{\partial \lambda_j} = \frac{\partial \Phi(\lambda, f^*)}{\partial \lambda_j} = -\mathbb{E}_{x \sim P_j}[f^*(x)]
\end{equation}
Since $f^* \in Lip_1(\mathcal{X})$ is a 1-Lipschitz function that approximates the underlying metric distance (roughly, $|f^*(x) - f^*(y)| \approx d(x, y)$), the term $\mathbb{E}_{P_j}[f^*]$ encodes the geometric "transport cost" or position of donor $j$ relative to the target. This value is distinct for each donor depending on its distance to the target. Therefore, the gradient is non-zero and points in the direction of the donors that are spatially closest to the target, regardless of whether the supports overlap.
\end{proof}

\section*{Appendix C: Additional Mathematical Proofs}

\subsection*{C.1 Proof of Theorem \ref{thm:identification} (Global Identification)}

\begin{proof}
    Let the objective function be $Q_\eta(\lambda) = g(\lambda) + \eta h(\lambda)$, defined on the simplex $\Delta^{J-1}$, where $g(\lambda) = W_1(P_\lambda, P_1)$ and $h(\lambda) = \sum_{j=2}^{J+1} \lambda_j \log \lambda_j$.

    \textbf{1. Convexity of the Transport Cost $g(\lambda)$.}
    By the Kantorovich-Rubinstein duality theorem, we express $g(\lambda)$ as:
    \begin{equation}
        g(\lambda) = \sup_{f \in Lip_1(\mathcal{X})} \left( \mathbb{E}_{P_1}[f] - \sum_{j=2}^{J+1} \lambda_j \mathbb{E}_{P_j}[f] \right).
    \end{equation}
    Define $\mathcal{L}(\lambda, f) := \mathbb{E}_{P_1}[f] - \sum_{j=2}^{J+1} \lambda_j \mathbb{E}_{P_j}[f]$. For any fixed $f \in Lip_1(\mathcal{X})$, the map $\lambda \mapsto \mathcal{L}(\lambda, f)$ is affine. Since the pointwise supremum of a family of affine functions is convex, it follows that:
    \begin{equation}
        g(\theta \lambda + (1-\theta)\lambda') \le \theta g(\lambda) + (1-\theta)g(\lambda'), \quad \forall \lambda, \lambda' \in \Delta^{J-1}, \forall \theta \in [0,1].
    \end{equation}
    Thus, $g(\lambda)$ is convex on $\Delta^{J-1}$.

    \textbf{2. Strict Convexity of the Regularizer $h(\lambda)$.}
    The Hessian of $h(\lambda)$ with respect to $\lambda$ is given by:
    \begin{equation}
        \nabla^2 h(\lambda) = \text{diag}\left(\frac{1}{\lambda_2}, \dots, \frac{1}{\lambda_{J+1}}\right).
    \end{equation}
    For any $\lambda$ in the relative interior of $\Delta^{J-1}$ (where $\lambda_j > 0$ for all $j$), the quadratic form satisfies:
    \begin{equation}
        v^\top \nabla^2 h(\lambda) v = \sum_{j=2}^{J+1} \frac{v_j^2}{\lambda_j} > 0, \quad \forall v \in \mathbb{R}^J \setminus \{0\}.
    \end{equation}
    Therefore, $\nabla^2 h(\lambda) \succ 0$, and $h(\lambda)$ is strictly convex.

    \textbf{3. Global Uniqueness.}
    Since $\eta > 0$, the total objective $Q_\eta(\lambda)$ is the sum of a convex function $g$ and a strictly convex function $\eta h$. Consequently, $Q_\eta(\lambda)$ is strictly convex on $\Delta^{J-1}$. Since $\Delta^{J-1}$ is a compact convex set, the minimizer
    \begin{equation}
        \lambda^* = \arg\min_{\lambda \in \Delta^{J-1}} Q_\eta(\lambda)
    \end{equation}
    exists and is unique. Assumption \ref{ass:affine} ensures the injectivity of the mixture map $\Phi$, guaranteeing that this unique weight vector corresponds to a structurally unique synthetic distribution.
\end{proof}

\subsection*{C.2 Proof of Theorem \ref{thm:consistency}}

\begin{proof}
    We verify the uniform convergence condition $\sup_{\lambda \in \Delta^{J-1}} |\hat{Q}_N(\lambda) - Q_\eta(\lambda)| \xrightarrow{p} 0$. Other conditions for M-estimator consistency (compactness, identification, continuity) hold by construction and Theorem \ref{thm:identification}.

    \textbf{Step 1: Decomposition via Triangle Inequalities.}
    The regularization term $\eta \Omega(\lambda)$ is deterministic and cancels out. Let $P_\lambda = \sum \lambda_j P_j$ and $\hat{P}_{\lambda,N} = \sum \lambda_j \hat{P}_{j,N}$. By the triangle inequality:
    \begin{align}
        |\hat{Q}_N(\lambda) - Q_\eta(\lambda)| &= |W_1(\hat{P}_{\lambda,N}, \hat{P}_{1,N}) - W_1(P_\lambda, P_1)| \\
        &\le W_1(\hat{P}_{\lambda,N}, P_\lambda) + W_1(\hat{P}_{1,N}, P_1).
    \end{align}

    \textbf{Step 2: Convexity Bound.}
    Using the joint convexity of the Wasserstein metric $W_1(\cdot, \cdot)$:
    \begin{equation}
        W_1(\hat{P}_{\lambda,N}, P_\lambda) = W_1\left(\sum_{j=2}^{J+1} \lambda_j \hat{P}_{j,N}, \sum_{j=2}^{J+1} \lambda_j P_j\right) \le \sum_{j=2}^{J+1} \lambda_j W_1(\hat{P}_{j,N}, P_j).
    \end{equation}
    Substituting this into the inequality from Step 1:
    \begin{equation}
        |\hat{Q}_N(\lambda) - Q_\eta(\lambda)| \le \sum_{j=2}^{J+1} \lambda_j W_1(\hat{P}_{j,N}, P_j) + W_1(\hat{P}_{1,N}, P_1).
    \end{equation}

    \textbf{Step 3: Uniformity Argument.}
    Since $\lambda \in \Delta^{J-1}$, we have $0 \le \lambda_j \le 1$ for all $j$. Taking the supremum over the simplex eliminates the dependence on $\lambda$:
    \begin{equation}
        \sup_{\lambda \in \Delta^{J-1}} |\hat{Q}_N(\lambda) - Q_\eta(\lambda)| \le \sum_{j=2}^{J+1} W_1(\hat{P}_{j,N}, P_j) + W_1(\hat{P}_{1,N}, P_1).
    \end{equation}

    \textbf{Step 4: Convergence of Empirical Measures.}
    Under Assumption \ref{ass:moments} (Finite Moments), Varadarajan's Theorem guarantees the convergence of empirical measures in the Wasserstein distance:
    \begin{equation}
        W_1(\hat{P}_{j,N}, P_j) \xrightarrow{a.s.} 0, \quad \forall j \in \{1, \dots, J+1\}.
    \end{equation}
    Consequently, the RHS of the inequality in Step 3 converges to 0 almost surely. Thus, $\sup_{\lambda} |\hat{Q}_N(\lambda) - Q_\eta(\lambda)| \xrightarrow{p} 0$, implying $\hat{\lambda}_N \xrightarrow{p} \lambda^*$.
\end{proof}

\subsection*{C.3 Proof of Theorem \ref{thm:normality} (Asymptotic Normality)}

\begin{proof}
    Let $\Psi_N(\lambda) = \nabla_\lambda \hat{Q}_N(\lambda)$ be the gradient of the empirical objective. Since $\hat{\lambda}_N$ is an interior solution (due to entropic regularization forcing weights away from boundaries), it satisfies the first-order condition $\Psi_N(\hat{\lambda}_N) = 0$.
    
    \textbf{Step 1: Taylor Expansion.}
    Expanding around $\lambda^*$:
    \begin{equation}
        0 = \Psi_N(\hat{\lambda}_N) = \Psi_N(\lambda^*) + \nabla_\lambda \Psi_N(\tilde{\lambda}) (\hat{\lambda}_N - \lambda^*),
    \end{equation}
    where $\tilde{\lambda}$ lies between $\hat{\lambda}_N$ and $\lambda^*$. By the consistency theorem and smoothness, $\nabla_\lambda \Psi_N(\tilde{\lambda}) \xrightarrow{p} H^*$, where $H^*$ is the positive definite Hessian (Assumption \ref{ass:hessian}). Thus:
    \begin{equation}
        \hat{\lambda}_N - \lambda^* = -(H^*)^{-1} \Psi_N(\lambda^*) + o_p(N^{-1/2}).
    \end{equation}

    \textbf{Step 2: Decomposition of the Score.}
    The empirical gradient at the truth, $\Psi_N(\lambda^*)$, decomposes into a stochastic term and an approximation bias:
    \begin{equation}
        \Psi_N(\lambda^*) = \underbrace{\frac{1}{N} \sum_{k=1}^N S(Y^{(k)}, \lambda^*)}_{\text{Empirical Process}} + \underbrace{\text{Bias}_N}_{\text{Sieve Error}}.
    \end{equation}
    Here, the score function $S(Y, \lambda^*) = \eta(1 + \log \lambda^*) + \nabla_\lambda W_1(P_{\lambda^*}, P_1)$ depends on the optimal dual potential $f^*$.

    \textbf{Step 3: Handling Bias and Variance.}
    Under Assumption \ref{ass:sieve_rate}, the bias term satisfies $\sqrt{N} \cdot \text{Bias}_N = O(\sqrt{N}\delta_N) = o(1)$.
    Under Assumption \ref{ass:donsker}, the empirical process term obeys the Central Limit Theorem:
    \begin{equation}
        \sqrt{N} \left( \frac{1}{N} \sum_{k=1}^N S(Y^{(k)}, \lambda^*) \right) \xrightarrow{d} \mathcal{N}(0, V),
    \end{equation}
    where $V = \mathbb{E}[S(Y, \lambda^*) S(Y, \lambda^*)^\top]$.

    \textbf{Conclusion.}
    Multiplying the Taylor expansion by $\sqrt{N}$ and substituting the components:
    \begin{equation}
        \sqrt{N}(\hat{\lambda}_N - \lambda^*) = -(H^*)^{-1} \cdot \mathcal{N}(0, V) + o_p(1) \xrightarrow{d} \mathcal{N}(0, (H^*)^{-1} V (H^*)^{-1}).
    \end{equation}
\end{proof}

\subsection*{C.4 Proof of Theorem \ref{thm:permutation} (Validity of Permutation Inference)}

\begin{proof}
    Let $\mathbf{Y}$ denote the full dataset of observed outcomes for all units over all time periods. Let $\mathcal{U} = \{1, \dots, J+1\}$ be the set of unit indices. 
    
    \textbf{1. Exchangeability under the Null.}
    The null hypothesis $H_0$ posits that the intervention has no effect on the outcome distribution. Under $H_0$, and assuming random assignment of the treatment, the joint distribution of the data is invariant to permutations of the unit labels. Specifically, the index of the treated unit, denoted by the random variable $I$, is uniformly distributed over $\mathcal{U}$:
    \begin{equation}
        \mathbb{P}(I = j) = \frac{1}{J+1}, \quad \forall j \in \mathcal{U}.
    \end{equation}

    \textbf{2. Distribution of the Test Statistic.}
    Let $\{\hat{\tau}_j\}_{j=1}^{J+1}$ be the collection of placebo statistics computed from the data $\mathbf{Y}$. Since the assignment $I$ is uniform on $\mathcal{U}$, the test statistic corresponding to the treated unit, $\hat{\tau}_I$, is equally likely to take any value from the set $\{\hat{\tau}_1, \dots, \hat{\tau}_{J+1}\}$.
    
    \textbf{3. Exact Size Control.}
    The p-value is defined as $P(\mathbf{Y}) = \frac{1}{J+1} \sum_{j=1}^{J+1} \mathbb{I}(\hat{\tau}_j \ge \hat{\tau}_I)$. 
    Let $R$ be the rank of $\hat{\tau}_I$ among $\{\hat{\tau}_j\}$ in descending order (where rank 1 is the largest). The p-value can be written as $P(\mathbf{Y}) = \frac{R}{J+1}$.
    
    Since $I$ is uniform, $R$ follows a discrete uniform distribution on $\{1, \dots, J+1\}$. For any significance level $\alpha \in [0, 1]$, let $k = \lfloor \alpha(J+1) \rfloor$. The event $\{p\text{-value} \le \alpha\}$ is equivalent to $\{R \le \alpha(J+1)\}$, or $\{R \le k\}$.
    
    The probability of this event is:
    \begin{equation}
        \mathbb{P}_{H_0}(R \le k) = \frac{k}{J+1} = \frac{\lfloor \alpha(J+1) \rfloor}{J+1} \le \alpha.
    \end{equation}
    
    Thus, $\mathbb{P}_{H_0}(p\text{-value} \le \alpha) \le \alpha$, proving that the permutation test is a valid level-$\alpha$ test in finite samples.
\end{proof}

\section*{Appendix D: Algorithmic Stability and Computational Cost}
\label{app:stability}

To address concerns regarding the reproducibility and stability of the WGAN training process, we conducted a sensitivity analysis based on the data generating process in Simulation 1. We specifically targeted the difficult scenario with 4\% heavy-tailed contamination to stress-test the estimator.

We varied the key hyperparameters of the Critic network: the network width ($d_{model}$), the learning rate ($\alpha$), and the gradient penalty coefficient ($\lambda_{GP}$). Table \ref{tab:sensitivity} reports the Average RMSE of the estimated weights $\hat{\lambda}$ and the Average Wall-Clock Execution Time over 20 Monte Carlo runs for each configuration.

\begin{table}[h]
\centering
\caption*{Table D1: Sensitivity of WGAN Estimator Performance and Cost to Hyperparameters}
\label{tab:sensitivity}
\begin{tabular}{lccccc}
\hline
\textbf{Configuration} & \textbf{Width} & \textbf{LR} & \textbf{GP Coeff} & \textbf{Avg RMSE} & \textbf{Avg Time (s)} \\ \hline
\textbf{Baseline (WGAN)} & \textbf{64} & $\mathbf{10^{-3}}$ & \textbf{10.0} & \textbf{0.0721} & \textbf{33.87} \\
\hline
\textit{Variation 1: Architecture} & 32 & $10^{-3}$ & 10.0 & 0.0712 & 30.19 \\
 & 128 & $10^{-3}$ & 10.0 & 0.0713 & 52.75 \\
\hline
\textit{Variation 2: Learning Rate} & 64 & $10^{-4}$ & 10.0 & 0.0706 & 33.74 \\
 & 64 & $10^{-2}$ & 10.0 & 0.0892 & 33.77 \\
\hline
\textit{Variation 3: Constraint} & 64 & $10^{-3}$ & 1.0 & 0.0715 & 33.79 \\
 & 64 & $10^{-3}$ & 0.1 & 0.0725 & 33.76 \\
\hline
\textbf{Benchmark ($L_2$-CDF)} & N/A & N/A & N/A & \textbf{0.2252} & \textbf{0.08} \\
\hline
\end{tabular}
\begin{flushleft}
\footnotesize
\emph{Note: Results averaged over 20 Monte Carlo runs with 4\% data contamination. The WGAN Baseline reduces the estimation error (RMSE) by a factor of 3 compared to the $L_2$ Benchmark.}
\end{flushleft}
\end{table}

The results in Table \ref{tab:sensitivity} support three key conclusions regarding the proposed method:

\begin{enumerate}
    \item \textbf{Parameter Stability:} The estimator exhibits a wide stable region. Varying the network width (32 to 128) or the gradient penalty coefficient (0.1 to 10.0) results in negligible fluctuations in RMSE (ranging between 0.071 and 0.072). Performance degradation is only observed when the learning rate is set aggressively high ($10^{-2}$), where RMSE increases to 0.0892, though it still significantly outperforms the benchmark.
    
    \item \textbf{Architecture Efficiency:} Our choice of the Baseline architecture (Width=64) represents an optimal trade-off. Increasing the width to 128 increases the computational cost by approximately 56\% (from 33.87s to 52.75s) without providing any improvement in estimation accuracy (RMSE 0.0713).
    
    \item \textbf{Cost-Benefit Analysis:} While the WGAN solver is computationally more intensive than the analytical $L_2$-CDF solver (approx. 34 seconds vs. 0.08 seconds), the performance gain is substantial. The WGAN approach reduces the RMSE from 0.2252 to 0.0721. In empirical research, where accuracy is paramount, trading less than one minute of computation time for a $3\times$ reduction in estimation bias is a highly favorable exchange.
\end{enumerate}

\section*{Appendix E}

\begin{figure}[H]
    \centering
    \includegraphics[width=1\linewidth]{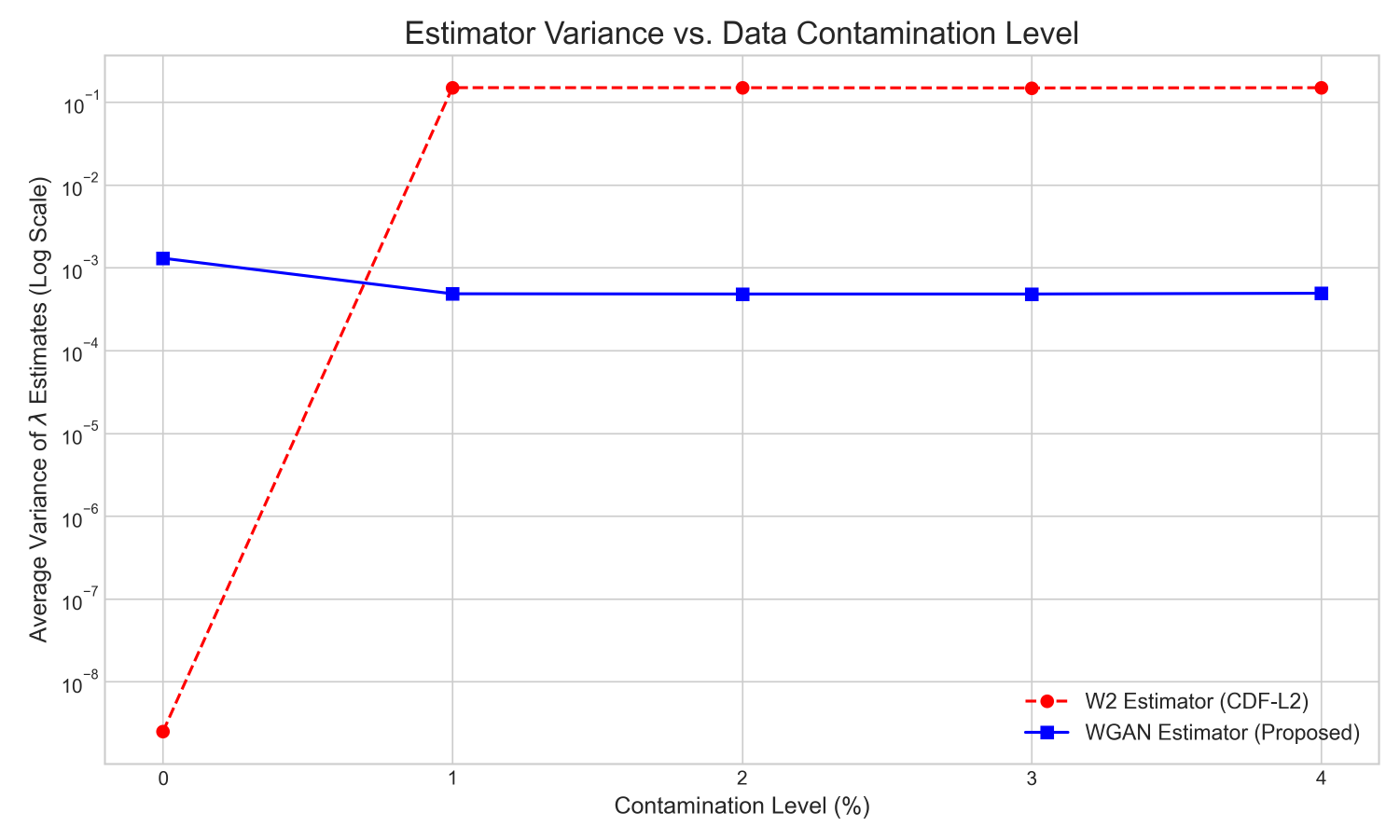}
    \caption*{Figure E1: The Average Variance of $\lambda$ over Different Contamination Levels}
    \label{fig:placeholder}
\end{figure}

\begin{landscape}

\begin{table}[htbp]
\centering
\caption*{Table E1: Bias / Var / W1 / W2 (per period, 1\%heavy-tail contaminated)}
\label{simulation2}
\begin{tabular}{llrrrrrrrrrr}
\toprule
\textbf{Period} & \textbf{Method} & \textbf{W2\_mean} & \textbf{W1\_mean} & \textbf{Bias\_$\lambda_1$} & \textbf{Var\_$\lambda_1$} & \textbf{Bias\_$\lambda_2$} & \textbf{Var\_$\lambda_2$} & \textbf{Bias\_$\lambda_3$} & \textbf{Var\_$\lambda_3$} & \textbf{Bias\_$\lambda_4$} & \textbf{Var\_$\lambda_4$} \\
\midrule
t1 & W2   & 0.061216 & 0.458148 & 0.068579 & 0.143471 &  0.077914 & 0.175395 & -0.073466 & 0.159012 & -0.073027 & 0.108486 \\
t1 & WGAN & 0.065782 & 0.493422 & 0.100645 & 0.000539 & -0.001229 & 0.000489 & -0.100268 & 0.000481 &  0.000853 & 0.000503 \\
t2 & W2   & 0.056452 & 0.394906 & 0.172024 & 0.185865 & -0.038184 & 0.131364 & -0.136211 & 0.143936 &  0.002371 & 0.168702 \\
t2 & WGAN & 0.061352 & 0.429676 & 0.099921 & 0.000478 & -0.000890 & 0.000464 & -0.102225 & 0.000429 &  0.003194 & 0.000531 \\
t3 & W2   & 0.053392 & 0.357901 & 0.103638 & 0.145807 & -0.008058 & 0.147179 & -0.092062 & 0.145742 & -0.003518 & 0.147599 \\
t3 & WGAN & 0.057128 & 0.383460 & 0.097526 & 0.000394 &  0.001314 & 0.000543 & -0.098108 & 0.000480 & -0.000732 & 0.000464 \\
\bottomrule
\end{tabular}
\end{table}

\begin{table}[htbp]
\centering
\caption*{Table E2: Bias / Variance / W2 (per period, 2\% heavy-tail contaminated)}
\label{simulation2}
\begin{tabular}{llrrrrrrrrr}
\toprule
\textbf{Period} & \textbf{Method} & \textbf{W2\_mean} & \textbf{Bias\_$\lambda_1$} & \textbf{Var\_$\lambda_1$} & \textbf{Bias\_$\lambda_2$} & \textbf{Var\_$\lambda_2$} & \textbf{Bias\_$\lambda_3$} & \textbf{Var\_$\lambda_3$} & \textbf{Bias\_$\lambda_4$} & \textbf{Var\_$\lambda_4$} \\
\midrule
t1 & W2   & 0.059265 &  0.068588 & 0.142982 &  0.074507 & 0.172140 & -0.074568 & 0.154759 & -0.068528 & 0.108766 \\
t1 & WGAN & 0.063801 &  0.100757 & 0.000541 & -0.001267 & 0.000491 & -0.100278 & 0.000483 &  0.000788 & 0.000503 \\
t2 & W2   & 0.054668 &  0.170705 & 0.186228 & -0.039308 & 0.131252 & -0.136150 & 0.142809 &  0.004753 & 0.166316 \\
t2 & WGAN & 0.059514 &  0.099932 & 0.000480 & -0.000838 & 0.000467 & -0.102283 & 0.000430 &  0.003190 & 0.000531 \\
t3 & W2   & 0.051781 &  0.105282 & 0.148106 & -0.004821 & 0.147294 & -0.093533 & 0.142721 & -0.006928 & 0.144321 \\
t3 & WGAN & 0.055495 &  0.098308 & 0.000400 &  0.001588 & 0.000541 & -0.098705 & 0.000481 & -0.001192 & 0.000466 \\
\bottomrule
\end{tabular}
\end{table}
\end{landscape}

\begin{landscape}
\begin{table}[htbp]
\centering
\caption*{Table E3: Bias / Variance / W2 under 3\% and 4\% contamination levels}
\label{3and4}

\begin{subtable}{0.95\linewidth}
\centering
\caption{3\% contamination}
\label{tab:3percent}
\resizebox{\linewidth}{!}{%
\begin{tabular}{llrrrrrrrrr}
\toprule
\textbf{Period} & \textbf{Method} & \textbf{W2\_mean} & \textbf{Bias\_$\lambda_1$} & \textbf{Var\_$\lambda_1$} & \textbf{Bias\_$\lambda_2$} & \textbf{Var\_$\lambda_2$} & \textbf{Bias\_$\lambda_3$} & \textbf{Var\_$\lambda_3$} & \textbf{Bias\_$\lambda_4$} & \textbf{Var\_$\lambda_4$} \\
\midrule
t1 & W2   & 0.057430 &  0.073791 & 0.146167 &  0.070292 & 0.170923 & -0.074066 & 0.153770 & -0.070017 & 0.107139 \\
t1 & WGAN & 0.061938 &  0.101016 & 0.000549 & -0.000845 & 0.000486 & -0.100637 & 0.000485 &  0.000466 & 0.000508 \\
t2 & W2   & 0.053103 &  0.172074 & 0.189955 & -0.040133 & 0.129689 & -0.134728 & 0.142373 &  0.002786 & 0.164408 \\
t2 & WGAN & 0.057888 &  0.099966 & 0.000480 & -0.000821 & 0.000468 & -0.102311 & 0.000432 &  0.003167 & 0.000532 \\
t3 & W2   & 0.050367 &  0.104191 & 0.148390 &  0.000439 & 0.150135 & -0.099174 & 0.139836 & -0.005456 & 0.142830 \\
t3 & WGAN & 0.054010 &  0.098292 & 0.000401 &  0.001601 & 0.000541 & -0.098698 & 0.000482 & -0.001196 & 0.000467 \\
\bottomrule
\end{tabular}%
}
\end{subtable}

\vspace{1em} 

\begin{subtable}{0.95\linewidth}
\centering
\caption{4\% contamination}
\label{tab:4percent}
\resizebox{\linewidth}{!}{%
\begin{tabular}{llrrrrrrrrr}
\toprule
\textbf{Period} & \textbf{Method} & \textbf{W2\_mean} & \textbf{Bias\_$\lambda_1$} & \textbf{Var\_$\lambda_1$} & \textbf{Bias\_$\lambda_2$} & \textbf{Var\_$\lambda_2$} & \textbf{Bias\_$\lambda_3$} & \textbf{Var\_$\lambda_3$} & \textbf{Bias\_$\lambda_4$} & \textbf{Var\_$\lambda_4$} \\
\midrule
t1 & W2   & 0.055836 &  0.071413 & 0.145747 &  0.072847 & 0.171297 & -0.077322 & 0.149187 & -0.066938 & 0.107604 \\
t1 & WGAN & 0.060274 &  0.101065 & 0.000550 & -0.000878 & 0.000487 & -0.100699 & 0.000489 &  0.000512 & 0.000511 \\
t2 & W2   & 0.051678 &  0.169040 & 0.189742 & -0.038132 & 0.129740 & -0.133904 & 0.141666 &  0.002996 & 0.162231 \\
t2 & WGAN & 0.056377 &  0.099986 & 0.000480 & -0.001200 & 0.000471 & -0.102193 & 0.000441 &  0.003407 & 0.000541 \\
t3 & W2   & 0.049146 &  0.104179 & 0.149000 &  0.000874 & 0.149375 & -0.103374 & 0.132732 & -0.001679 & 0.140943 \\
t3 & WGAN & 0.052708 &  0.098331 & 0.000401 &  0.001629 & 0.000542 & -0.098819 & 0.000482 & -0.001141 & 0.000469 \\
\bottomrule
\end{tabular}%
}
\end{subtable}

\end{table}
\end{landscape}

\end{document}